\documentclass[11pt,reqno,final]{amsart}

\usepackage{amsmath, amssymb, epsfig}
\usepackage{mathrsfs}
\usepackage{color}
\usepackage{alg}
\usepackage{url}
\usepackage[font=small,margin=10pt,labelfont={bf},labelsep={space}]{caption}
\usepackage{subfig}
\usepackage{wrapfig}

\usepackage[scaled]{helvet} 

\usepackage{fourier} 
\usepackage{bm}

\makeatletter
\def\algspacing{\alg@unmargin}
\makeatother

\usepackage[margin=1in]{geometry}

\newlength{\algorithmwidth}
\algorithmwidth=0.95\textwidth

\theoremstyle{plain}
\newtheorem{theorem}{Theorem}[section]
\newtheorem{proposition}[theorem]{Proposition}
\newtheorem{corollary}[theorem]{Corollary}
\newtheorem{lemma}[theorem]{Lemma}

\theoremstyle{definition}
\newtheorem{definition}[theorem]{Definition}

\theoremstyle{remark}

\numberwithin{theorem}{section}
\numberwithin{equation}{section}

\newcommand{\<}{\left\langle}
\renewcommand{\>}{\right\rangle}

\DeclareMathOperator*{\supp}{supp}

\DeclareMathOperator*{\argmin}{arg min}

\newcommand{\defby}{\overset{\mathrm{\scriptscriptstyle{def}}}{=}}
\def \unrecoverable {localization factor}

\def \C {\mathbb{C}}
\def \R {\mathbb{R}}
\def \W {\mtx{W}}
\def \P {\mathbb{P}}
\def \E {\mathbb{E}}
\def \NN {\mathbb{N}}
\def \N {\mathcal{N}}
\newcommand{\e}{\mathrm{e}}

\newcommand{\vct}[1]{\bm{#1}}
\newcommand{\mtx}[1]{\bm{#1}}

\def \A {\mtx{A}}
\def \Atilde {\mtx{\tilde{A}}}

\def \rtilde {\vct{r}}
\def \D {\mtx{D}}
\def \F {\mtx{F}}
\def \H {\mtx{H}}
\def \tH {\tilde{\mtx{H}}}

\def \B {\mtx{B}}

\def \b {\vct{b}}
\def \r {\vct{r}}
\def \e {\vct{e}}
\def \d {\vct{d}}
\def \x {\vct{x}}
\def \y {\vct{y}}
\def \z {\vct{z}}
\def \f {\vct{f}}
\def \u {\vct{u}}
\def \w {\vct{w}}

\newcommand{\vertiii}[1]{{\left\vert\kern-0.25ex\left\vert\kern-0.25ex\left\vert #1 
    \right\vert\kern-0.25ex\right\vert\kern-0.25ex\right\vert}}
\def \unrec {{\eta}}
\def \Dsn {\mathcal{D}}
\def \DsnOne {\mathcal{D}_{\text{aux}}}
\def \BoneS {\mathcal{B}_{1}^{\Lambda}}

\newcommand{\tripnorm}[1]{\vertiii{#1}_s}
\newcommand{\Xnorm}[1]{\norm{#1}_X}

\newcommand{\ip}[2]{\left\langle {#1}, \, {#2} \right\rangle}
\newcommand{\norm}[1]{\lVert {#1} \rVert}

\newcommand{\enorm}[1]{\norm{#1}_2}
\newcommand{\enormsq}[1]{\enorm{#1}^2}

\newcommand{\Id}{\mathbf{I}}

\begin{document}

\title{Compressive Sensing with Redundant Dictionaries and Structured Measurements} 
\author{Felix Krahmer, Deanna Needell, and Rachel Ward}
       
\date{\today}

\begin{abstract}
Consider the problem of recovering an unknown signal from undersampled measurements, given the knowledge that the signal has a sparse representation in a specified dictionary $\D$.  This problem is now understood to be well-posed and efficiently solvable under suitable assumptions on the measurements and dictionary, if the number of measurements scales roughly with the sparsity level.  One sufficient condition for such is the $\D$-restricted isometry property ($\D$-RIP), which asks that the sampling matrix approximately preserve the norm of all signals which are sufficiently sparse in $\D$.  While many classes of random matrices are known to satisfy such conditions, such matrices are not representative of the structural constraints imposed by practical sensing systems.   We close this gap in the theory by demonstrating that one can subsample a fixed orthogonal matrix in such a way that the $\D$-RIP will hold, provided this basis is sufficiently incoherent with the sparsifying dictionary $\D$.   We also extend this analysis to allow for weighted sparse expansions.   Consequently, we arrive at compressive sensing recovery guarantees for structured measurements and redundant dictionaries, opening the door to a wide array of practical applications. 
\end{abstract}

\maketitle

\section{Introduction}
\subsection{Compressive Sensing}
The compressive sensing paradigm, as first introduced by Cand{\`e}s and Tao \cite{RefWorks:577} and Donoho \cite{RefWorks:70}, is based on using available degrees of freedom in a sensing mechanism to tune the measurement system so as to allow for efficient recovery of a particular type of signal or image of interest from as few measurements as possible.  A model assumption that allows for such signal recovery is \emph{sparsity}: the signal can be well-approximated by just a few elements of a given representation system.

Often, a near-optimal strategy is to choose the measurements completely at random, for example following a Gaussian distribution \cite{RefWorks:285}. Typically, however, some additional structure is imposed by the application at hand, and randomness can only be infused in the remaining degrees of freedom. For example, magnetic resonance imaging (MRI) is known to be well modeled by inner products with Fourier basis vectors. This structure cannot be changed, and the only aspect that can be decided at random is which Fourier basis vectors to select.

An important difference between completely random measurement systems and structured random measurement systems is in the aspect of universality. Gaussian measurement systems, among many other systems with minimal imposed structure, are oblivious to the basis in which the underlying signal is sparse, and achieve equal reconstruction quality for all different orthonormal basis representations. For structured measurement systems, this is, in general, no longer the case. When the measurements are uniformly subsampled from an orthonormal basis such as the Fourier basis, one requires, for example, that the measurement basis and the sparsity basis are \emph{incoherent}; the inner products between vectors from the two bases are small. If the two bases are not incoherent, more refined concepts of incoherence are required \cite{KW14, adcock2013breaking}. An important example is that of the Fourier measurement basis and a wavelet sparsity basis. Since both contain the constant vector, they are maximally coherent.

\subsection{Motivation}

All of the above mentioned works, however, exclusively cover the case of sparsity in \emph{orthonormal} basis representations.  On the other hand, there are a vast number of applications in which sparsity is expressed not in terms of a basis but in terms of a redundant, often highly overcomplete, dictionary.  Specifically, if $\f \in \C^{n}$ is the signal of interest to be recovered, then one expresses $\f = \D\x$, where $\D\in\C^{n\times N}$ is an overcomplete dictionary and $\x\in\C^N$ is a sparse (or nearly sparse) coefficient vector.  Redundancy is widespread in practice, either because no sparsifying orthonormal basis exists for the signal class of interest, or because the redundancy itself is useful and allows for a significantly larger, richer class of signals which can be sparsely represented in the resulting dictionary.  For example, it has been well documented that overcompleteness is the key to a drastic reduction in artifacts and recovery error in the denoising framework~\cite{RefWorks:342,RefWorks:344}.

In the compressive sensing framework, results using overcomplete dictionaries are motivated by the broad array of tight (and often Parseval) frames appearing in practical applications.  For example, if one assumes sparsity with respect to the Discrete Fourier Transform (DFT), this is implicitly assuming that the signal is well-represented by frequencies lying along the lattice of the DFT.  To allow for more flexibility in this rigid assumption, one instead may employ the oversampled DFT frame, containing frequencies on a much finer grid, or even over intervals of varying widths.  Gabor frames are used in imaging as well as radar and sonar applications, which are often highly redundant~\cite{RefWorks:224}.  Many of the most widely-used frames in imaging applications such as undecimated wavelet frames~\cite{RefWorks:345,RefWorks:342}, curvelets~\cite{RefWorks:273}, shearlets~\cite{labate2005sparse,easley2008sparse}, framelets~\cite{cai2008framelet}, and many others, are overcomplete with highly correlated columns.

\subsection{Related Work}

Due to the abundance of relevant applications, a number of works have studied compressive sensing for overcomplete frames. The first work on this topic aimed to recover the coefficient vector $\x$ directly, and thus required strong incoherence assumptions on the dictionary $\D$ \cite{RefWorks:583}. More recently, it was noted that if one instead aims to recover $\f$ rather than $\x$, recovery guarantees can be obtained under weaker assumptions. Namely, one only needs that the measurement matrix $\A$  respects the norms of signals which are sparse in the dictionary $\D$.  To quantify this, Cand\`es et al.~\cite{RefWorks:60} define the $\D$-restricted isometry property ($\D$-RIP in short, see Definition~\ref{DRIP-orig} below).  
For measurement matrices that have this property, a number of algorithms have been shown to guarantee recovery under certain assumptions.
Optimization approaches such as $\ell_1$-analysis~\cite{RefWorks:340,RefWorks:60,RefWorks:607,RefWorks:581,liu2012compressed,rauhut2013analysis,giryes2014sampling} and greedy approaches~\cite{Paper5,RefWorks:607,giryes2013greedy,peleg2013performance,giryes2014near} have been studied.  

This paper establishes the $\D$-RIP for structured random measurements formed by subsampling orthonormal bases, allowing for these types of recovery results to be utilized in more realistic settings. To date, most random matrix constructions known to yield $\D$-RIP matrices with high probability are random matrices with a certain concentration property. As shown in \cite{RefWorks:60}, such a property implies that for arbitrary dictionary $\D$, one obtains the $\D$-RIP with high probability. This can be interpreted as a dictionary version of the universality property discussed above.  It has now been shown that several classes of random matrices satisfy this property as well as subsampled structured matrices, after applying random column signs~\cite{Dasgupta, RefWorks:230}. The matrices in both of these cases are motivated by application scenarios, but typically in applications they appear without the randomized column signs.  In many cases one is not able to apply column signs in practice; for example in cases where 
the measurements are fixed such as in MRI, one has no choice but to use unsigned Fourier samples and cannot pre-process the data to incorporate column signs. Without these signs however, such measurement ensembles will not work for arbitrary dictionaries in general.  This is closely related to the underlying RIP matrix constructions not being universal. For example, it is clear that randomly subsampled Fourier measurements will fail for the oversampled Fourier dictionary (for reasons similar to the basis case). In this work, we address this issue, deriving recovery guarantees that take into account the dictionary. Similar 
to the basis case, our analysis will be coherence based. A similar approach has also been taken by Poon in \cite{P15} (completed after the first version of our paper), for an infinite dimensional version of the problem.

\subsection{Contribution}

Our main result shows that a wide class of orthogonal matrices having uniformly bounded entries can be subsampled to obtain a matrix that has the $\D$-RIP and hence yields recovery guarantees for sparse recovery in the setting of redundant dictionaries. 
As indicated above, our technical estimates below will imply such guarantees for various algorithms. As an example we focus on the method of $\ell_1$-analysis, for which the first $\D$-RIP based guarantees were available \cite{RefWorks:60}.  Our technical estimates will also provide more general guarantees for weighted $\ell_1$-analysis minimization (a weighted version of \eqref{eq:l1ana}, see \eqref{eq:l1anaW} in Section \ref{sec:main} for details) in case one has prior knowledge of the underlying sparsity support. 

Recall that the method of $\ell_1$-analysis consists of estimating a signal $\f$ from noisy measurements $\y = \A\f+\e$ by solving the convex minimization problem
\begin{equation}
 \f^{\sharp}= \argmin_{\tilde \f\in\C^n} \|\D^*\tilde\f\|_1 \text{\quad such that \quad} \|\A \tilde\f-\y\|_2\leq \varepsilon \tag{$P_1$}, \label{eq:l1ana}
\end{equation}
where $\varepsilon$ is the noise level, that is, $\|\e\|_2\leq \varepsilon$.  The $\ell_1$-analysis method (like alternative approaches) assumes  that for the signal $\f=\D\x$, not only is the underlying (synthesis) coefficient sequence $\x$ sparse (typically unknown and hard to obtain), but also the analysis coefficients $\D^*\f$ are nearly sparse, i.e., dominated by a few large entries. We refer the reader to Theorem~\ref{thm:l1anarec} below for the precise formulation of the resulting recovery guarantees (as derived in \cite{RefWorks:60}). The assumption has been observed empirically for many dictionaries used in practice such as the Gabor frame, undecimated wavelets, curvelets, etc. (see, e.g.,~\cite{RefWorks:60} for a detailed description of such frames) and is also key for a number of thresholding approaches to signal denoising \cite{RefWorks:273, shearlet07, easley2008sparse}.

A related and slightly stronger signal model, in which the analysis vector $\D^*\f$ is sparse or nearly sparse, has been considered independently from coefficient sparsity (e.g.,~\cite{RefWorks:581}), and is commonly called the co-sparsity model.

The results in this paper need a similar, but somewhat weaker assumption to hold for all signals corresponding to sparse synthesis coefficients $\x$.  Namely, one needs to control the \unrecoverable~as we now introduce. 
\begin{definition}\label{unrec-orig}
For a dictionary $\D\in\C^{n\times N}$ and a sparsity level $s$, we define the \textit{\unrecoverable} as
\begin{equation}\label{eq:unrec}
\unrec_{s, \D} =\unrec \defby \sup_{\enorm{\D\z}=1, \norm{\z}_{0}\leq s} \frac{\norm{\D^*\D\z}_{1}}{\sqrt{s}}.
\end{equation}
\end{definition}
In the following we will mainly consider dictionaries, which form Parseval frames, that is, $\D^*$ is an isometry.
Then the term \unrecoverable~is appropriate because this quantity can be viewed as the factor by which sparsity is preserved under the gram matrix map $\D^{*}\D$, compared to the case where $\D$ is an orthonormal basis and $\D^{*} \D = \Id_n$ and $\eta \equiv 1$ by Cauchy-Schwarz.  For a general family of such frames parameterized by the redundancy $N/n$, $\unrec$ will increase with the redundancy; 
families of dictionaries where such growth is relatively slow will be of interest.  For example, new results on the construction of unit-norm tight frames give a constructive method to generate \textit{spectral tetris} frames \cite{CFMWZ11}, whose gram matrices have at most $2\lceil N/n\rceil+6$ non-zeros per row, guaranteeing that $\unrec$ is proportional only to the redundancy factor $N/n$.  In fact, one can show that these are sparsest frames possible \cite{CHKK11}, suggesting that families of tight frames with \unrecoverable~depending linearly on the redundancy factor should be essentially optimal for $\ell_1$-analysis reconstruction.   We pose as a problem for subsequent work to obtain such estimates for dictionaries of practical interest, discussing a few examples in Section~\ref{sec:main}.  First, to illustrate that our results go strictly beyond existing theory, we show that harmonic frames (frames constructed by removing the high frequency rows of the DFT -- see \eqref{harmonic}) with small redundancy indeed have a bounded \unrecoverable. To our knowledge, this important case is not covered by existing theories. In addition, we also bound the \unrecoverable~for redundant Haar wavelet frames.  We expect, however, that it will be difficult to efficiently compute the \unrecoverable~of an arbitrary dictionary.
For bounded \unrecoverable s, we prove the following theorem.  

\begin{theorem}
 \label{thm:intro}
Fix a sparsity level $s < N$.  Let $\D\in\C^{n\times N}$ be a Parseval frame -- i.~e., $\D\D^*$ is the identity -- with columns $\{\d_1, \ldots, \d_N\}$,  and let $\B=\{\b_1,\dots, \b_n\}$ be an orthonormal basis of $\C^n$ which is {\em incoherent} to $\D$ in the sense that 
\begin{equation}\label{eq:incoh}
\sup_{i \in [n]} \sup_{ j\in [N]} | \langle{ \b_i, \d_j \rangle} | \leq K n^{-1/2}
\end{equation}
for some constant $K \geq 1$.

Consider the (unweighted) \unrecoverable~ $\unrec = \unrec_{s,\D}$ as in Definition~\ref{unrec-orig}.  Construct $\widetilde{\B}\in\C^{m\times n}$ by sampling row vectors from $\B$ i.i.d. uniformly at random. 
Provided
\begin{equation}\label{mboundloc}
m \geq C s K^2 \unrec^2\log^3(s\unrec^2) \log(N), \quad 
\end{equation}
then with probability $1-N^{-\log^3 (2s)}$, $\sqrt{\frac{n}{m}}\widetilde \B$ 
exhibits uniform recovery guarantees for $\ell_1$-analysis. That is, for every signal $\f$, the solution $\f^{\sharp}$ of the minimization problem~\eqref{eq:l1ana} with $\y=\sqrt{\frac{n}{m}}\widetilde\B\f+\e$ for noise $\e$ with $\|\e\|_2\leq \varepsilon$ satisfies
\begin{equation}\label{eq:intrec}
 \|\f^{\sharp}-\f\|_2 \leq C_1 \varepsilon + C_2 \frac{\|\D^*\f - (\D^*\f)_s \|_1}{\sqrt{s}}.
\end{equation}
Here, $C, C_1$, and $C_2$ are absolute constants independent of the dimensions and the signal $\f$.
\end{theorem}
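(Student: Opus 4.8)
The argument splits into a deterministic part and a probabilistic part, and only the latter is new. For the deterministic part I would simply invoke Theorem~\ref{thm:l1anarec}: if a measurement matrix satisfies the $\D$-RIP of Definition~\ref{DRIP-orig} at sparsity level proportional to $s$ with a sufficiently small isometry constant $\delta$, then the minimizer of \eqref{eq:l1ana} obeys the error estimate \eqref{eq:intrec}. Hence the whole task reduces to showing that, whenever $m$ satisfies \eqref{mboundloc}, the rescaled subsampled matrix $\A := \sqrt{n/m}\,\widetilde{\B}$ satisfies $\sup_{v\in T}\bigl|\enormsq{\A v}-1\bigr|\le\delta$, where $T := \{\,\D\z : \zeronorm{\z}\le s,\ \enorm{\D\z}=1\,\}$ is the set of $\D$-$s$-sparse unit vectors, with the stated probability.

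The first key step is a pointwise incoherence estimate for the set $T$. For $v=\D\z\in T$, the Parseval property $\D\D^*=\Id_n$ gives $v=\D\D^*v=\D(\D^*\D\z)$, so that for every $i\in[n]$
\begin{equation*}
\abs{\ip{\b_i}{v}}=\abs{\ip{\D^*\b_i}{\D^*\D\z}}\le\infnorm{\D^*\b_i}\,\norm{\D^*\D\z}_1\le\frac{K}{\sqrt n}\cdot\unrec\sqrt s
\end{equation*}
by \eqref{eq:incoh} and Definition~\ref{unrec-orig}. Thus every element of $T$ has ``flat'' coefficients of size at most $K\unrec\sqrt{s/n}$ in the sampling basis $\B$; this plays the role of the coherence parameter driving the classical RIP proofs for subsampled orthonormal systems, but with $\sqrt s$ replaced by $\unrec\sqrt s$.

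Next I would write $\sup_{v\in T}\bigl|\enormsq{\A v}-1\bigr|$ as the deviation of a chaos/empirical process and apply the chaining machinery for suprema of such processes --- either the Krahmer--Mendelson--Rauhut bound for suprema of chaos processes, or equivalently Rudelson--Vershynin symmetrization followed by Dudley's entropy integral. Indeed $\enormsq{\A v}=\frac{n}{m}\sum_{\ell=1}^m\abs{\ip{\b_{j_\ell}}{v}}^2$ with $j_\ell$ i.i.d.\ uniform and $\E\enormsq{\A v}=\enormsq{v}=1$ since $\B$ is orthonormal; the expected deviation and its tails are then controlled by the $\gamma_2$-functional of $T$ with respect to a metric which, by the flatness estimate above, is dominated by a combination of $K\unrec\sqrt{sn}\,\enorm{v-v'}$ and the plain $\ell_2$-metric. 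So everything reduces to an entropy estimate for $T$ in $\ell_2$, together with optimizing the deviation parameter to produce the failure probability $N^{-\log^3(2s)}$.

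The main obstacle --- and the technical core --- is this entropy estimate, because $T$ is not a set of sparse vectors but the image under $\D$ of the $s$-sparse unit ball, so the geometry is distorted by $\D$ and the naive $(1+2/t)^s$ volumetric bound is not directly applicable. I would exploit two features of $T$: it is contained in a union of $\binom{N}{s}$ subspaces $\range(\D_S)$, each of dimension at most $s$, and each of its elements is $\ell_\infty$-flat in $\B$ by the estimate above. Using a two-regime argument --- a Maurey-type empirical-method bound at large scales $t$, giving $\log N(T,\enorm{\cdot},t)\lesssim K^2\unrec^2 s\,t^{-2}\log N$ up to logarithmic factors, and the coordinate-subspace bound $\log N(T,\enorm{\cdot},t)\lesssim s\log(eN/s)+s\log(1+1/t)$ at small scales --- and balancing and integrating, one obtains $\gamma_2(T,\enorm{\cdot})\lesssim\sqrt s\,K\unrec\,\log^{3/2}(s\unrec^2)\sqrt{\log N}$. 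Since the chaining bound is governed by $\gamma_2^2$ (plus lower-order terms), this is exactly what forces the deviation below $\delta$ once $m$ satisfies \eqref{mboundloc}, and combining with Theorem~\ref{thm:l1anarec} yields \eqref{eq:intrec}. The weighted $\ell_1$-analysis variant announced in the introduction follows from the same scheme, with $\unrec$ replaced by its weighted counterpart and sparse sets by weighted-sparse sets throughout.
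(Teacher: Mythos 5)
Your proposal follows essentially the same route as the paper: reduce to the $\D$-RIP and invoke Theorem~\ref{thm:l1anarec}, derive the flatness bound $\abs{\ip{\b_i}{\D\z}}\le K\unrec\sqrt{s/n}$ from the Parseval property and the localization factor, then control the deviation via symmetrization and Dudley's entropy integral with exactly the two-regime covering estimate (volumetric over the $\binom{N}{s}$ supports at small scales, Maurey's empirical method at large scales), finishing with a concentration bound for the tail probability. This matches the paper's proof of Theorem~\ref{thm:main} (which uses an $X$-type seminorm with a H\"older split and a Bernstein inequality for empirical processes, minor technical variants of what you describe) and its specialization to Theorem~\ref{thm:intro}.
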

Above and throughout, $(\u)_s$ denotes the best $s$-sparse approximation to a signal $\u$, that is, $(\u)_{s} \defby \emph{arg}\min_{\norm{\z}_{0} \leq s}\enorm{\u-\z}$.  

\noindent{\bf Remarks.}\\  
\noindent{\bf 1.}  Our stability result \eqref{eq:intrec} for $\ell_1$ analysis guarantees recovery for the particular $\f = \D\z$ only up to the scaled $\ell_1$ norm of the tail of $\D^*\D\z$.  In fact, the quantity $\mathcal{E}^{*}_{\f} = \frac{\|\D^*\f - (\D^*\f)_s\|_1}{\sqrt{s}}$, referred to as the \emph{unrecoverable energy} of the signal $\f$ in $\D$ \cite{RefWorks:60}, is closely related to the \unrecoverable~$\eta_{s}$:
$$\eta_{s} = \sup_{\f = \D \z: \hspace{.5mm} \| \f \|_2 = 1, \| \z \|_0 \leq s} \frac{\|\D^*\f \|_1}{\sqrt{s}} \leq  \sup_{\f = \D \z: \hspace{.5mm} \| \f \|_2 = 1, \| \z \|_0 \leq s} \mathcal{E}^{*}_{f} + 1$$ 

\noindent{\bf 2.}  As mentioned, the proof of this theorem will proceed via the $\D$-RIP, so our analysis yields similar guarantees for other recovery methods as well.  

\noindent{\bf 3.}  It is interesting to remark on the role of incoherence in this setting.  While prior work in compressed sensing has required the measurement matrix $\widetilde\B$ itself be incoherent, we now ask instead for incoherence between the basis from which measurements are selected and the dictionary $\D$, rather than incoherence within $\D$.  Of course, note that the coherence of the dictionary $\D$ itself impacts the localization factor $\unrec_{s, \D}$.  We will see later in Theorem~\ref{thm:main} that the incoherence requirement between the basis and dictionary can be weakened even further by using a weighted approach.

\noindent{\bf 4.}  The assumption that $\D$ is a Parseval frame is not necessary but made for convenience of presentation.  Several results have been shown for frames which are not tight, e.g.,  \cite{liu2012compressed, rauhut2013analysis, giryes2014sampling}.  Indeed, any frame $\D \in\C^{n\times N}$ with linearly independent rows is such that
$$
\tilde{\D} := (\D \D^{*})^{-1/2} \D
$$
is a tight frame.  As observed in the recent paper \cite{Foucart15}, in this case one can use for sampling the adapted matrix $\tilde{\A} = (\tilde{\B} (\D \D^{*})^{-1/2}),$ as measuring the signal $\f = \D \z$  through $\y = \tilde{\A} \f +  \e$ is equivalent to measuring the signal $\tilde{\f} = \tilde{\D} \z$ through $\y = \tilde{\B} \tilde{\f} +  \e$.   Working through this extension poses an interesting direction for future work.

\subsection{Organization}
The rest of the paper is organized as follows. We introduce some notation and review some technical background in Section~\ref{sec:notate} before presenting a technical version of our main result in Section~\ref{sec:main}, which demonstrates that bases which are incoherent with a dictionary $\D$ can be randomly subsampled to obtain $\D$-RIP matrices.  In that section, we also discuss some relevant applications and the implications of our results, including the regime where the sampling basis is coherent to the dictionary. The proof of our main theorem is presented in the final section.

\section{Notation and technical background}\label{sec:notate}

Throughout the paper, we write $C$, $C'$, $C_1$, $C_2,\ldots$ to denote absolute constants; their values can change between different occurrences. We write $\Id_n$ to denote the $n\times n$ identity matrix and we denote by $[n]$ the set $\{1, 2, \ldots, n\}$.   For a vector $\u$, we denote the $j$th index by $\u[j]$ or $\u(j)$ or $\u_j,$ depending on which notation is the most clear in any given context; its restriction to only those entries indexed by a subset $\Lambda$ is donated by $u_\Lambda$.   These notations should not be confused with the notation $(\u)_s$, which we use to denote the best $s$-sparse approximation to $\u$, that is, $(\u)_{s} \defby \text{arg}\inf_{\norm{\z}_{0} \leq s}\enorm{\u-\z}$. Similarly for a matrix $\A$, $\A_j$ is the $j$th column and $\A_\Lambda$ is the submatrices consisting of the columns with indices in $\Lambda$.

\subsection{The unweighted case with incoherence} 
Recall that a dictionary $\D\in\C^{n\times N}$ is a {\em Parseval frame} if $\D\D^*=\Id_n$ and that a vector $\x$ is {\em $s$-sparse} if $\|\x\|_0 := |\supp(\x)| \leq s$. Then the restricted isometry property with respect to the dictionary $\D$ ($\D$-RIP) is defined as follows.
 \begin{definition}[\cite{RefWorks:60}]\label{DRIP-orig}
Fix a dictionary $\D\in\C^{n\times N}$ and matrix $\A\in\C^{m\times n}$.  The matrix $\A$ satisfies the $\D$-RIP with parameters $\delta$ and $s$ if
\begin{equation}\label{DRIPbound-orig}
(1-\delta)\enormsq{\D\x} \leq \enormsq{\A\D\x} \leq (1+\delta)\enormsq{\D\x}
\end{equation}
for all $s$-sparse vectors $\x\in\C^N$.
\end{definition}

Note that when $\D$ is the identity, this definition reduces to the standard definition of the restricted isometry property~\cite{RefWorks:48}. Under such an assumption on the measurement matrix, the following results bound the reconstruction error for the $\ell_1$-analysis method.

\begin{theorem}[\cite{RefWorks:60}]\label{thm:l1anarec}
Let $\D$ be a tight frame, $\varepsilon>0$, and consider a matrix $\A$ that has the $\D$-RIP with parameters $2s$ and $\delta<0.08$. 
Then for every signal $\f\in\C^n$, the reconstruction $\f^{\sharp}$ obtained from noisy measurements $\y=\A\f+\e$, $\|\e\|_2\leq \varepsilon$, via the $\ell_1$-analysis problem \eqref{eq:l1ana} above satisfies

\begin{equation}\label{L1bound}
   \enorm{\f-\f^{\sharp}} \leq \varepsilon + \frac{\norm{\D^*\f - (\D^*\f)_s}_{1}}{\sqrt{s}}.
\end{equation}
\end{theorem}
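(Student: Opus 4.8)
The plan is to run the by-now-standard restricted-isometry analysis of $\ell_1$ minimization, but phrased entirely in terms of the analysis coefficients $\D^{*}\f$ rather than any synthesis representation. Write $\vct{h} := \f^{\sharp} - \f$; since $\D$ is a Parseval frame, $\D^{*}$ is an isometry, so $\enorm{\vct{h}} = \enorm{\D^{*}\vct{h}}$ and it suffices to bound $\enorm{\w}$ where $\w := \D^{*}\vct{h}$ (note $\D\w = \D\D^{*}\vct{h} = \vct{h}$). Two inequalities drive everything. (i) \emph{Tube bound:} both $\f$ and $\f^{\sharp}$ are feasible for \eqref{eq:l1ana}, so $\enorm{\A\vct{h}} \le 2\varepsilon$. (ii) \emph{Cone bound:} optimality $\norm{\D^{*}\f^{\sharp}}_{1} \le \norm{\D^{*}\f}_{1}$ together with the triangle inequality, split over $T_{0}$ (the index set of the $s$ largest entries of $\D^{*}\f$) and its complement, gives $\norm{\w_{T_{0}^{c}}}_{1} \le \norm{\w_{T_{0}}}_{1} + 2\norm{\D^{*}\f - (\D^{*}\f)_{s}}_{1}$.

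Next, partition $T_{0}^{c}$ into blocks $T_{1}, T_{2}, \dots$ of size $s$ in order of decreasing magnitude of $\w$, and put $\Lambda := T_{0} \cup T_{1}$. The familiar shelling estimate yields $\sum_{j \ge 2} \enorm{\w_{T_{j}}} \le s^{-1/2} \norm{\w_{T_{0}^{c}}}_{1}$, which by the cone bound is at most $\enorm{\w_{T_{0}}} + 2\mathcal{E}$ with $\mathcal{E} := s^{-1/2}\norm{\D^{*}\f - (\D^{*}\f)_{s}}_{1}$. Consequently $\enorm{\w} \le \enorm{\w_{\Lambda}} + \sum_{j \ge 2}\enorm{\w_{T_{j}}} \le 2\,\enorm{\w_{\Lambda}} + 2\mathcal{E}$, so the whole problem reduces to bounding $\enorm{\w_{\Lambda}} = \enorm{(\D^{*}\vct{h})_{\Lambda}}$.

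The $\D$-RIP enters here, and the cross terms are the crux. From $\A\vct{h} = \A\D\w = \A\D\w_{\Lambda} + \sum_{j \ge 2} \A\D\w_{T_{j}}$ one expands $\enormsq{\A\D\w_{\Lambda}} = \ip{\A\D\w_{\Lambda}}{\A\vct{h}} - \sum_{j\ge 2}\ip{\A\D\w_{\Lambda}}{\A\D\w_{T_{j}}}$, bounds the first term by Cauchy--Schwarz, the tube bound and the $\D$-RIP upper estimate ($\le 2\sqrt{1+\delta}\,\varepsilon\,\enorm{\D\w_{\Lambda}}$), and treats the cross terms as follows. Unlike the orthonormal-basis case, the vectors $\D\w_{\Lambda}$ and $\D\w_{T_{j}}$ are \emph{not} orthogonal, so $\ip{\A\D\w_{\Lambda}}{\A\D\w_{T_{j}}}$ is not merely an $O(\delta)$ error; polarization applied to the $\D$-RIP gives only $\abs{\ip{\A\D\w_{\Lambda}}{\A\D\w_{T_{j}}} - \ip{\D\w_{\Lambda}}{\D\w_{T_{j}}}} \le \delta\,\enorm{\D\w_{\Lambda}}\,\enorm{\D\w_{T_{j}}}$. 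The saving grace is that the genuine inner products telescope:
\[
\sum_{j\ge 2}\ip{\D\w_{\Lambda}}{\D\w_{T_{j}}} \;=\; \ip{\D\w_{\Lambda}}{\D(\w - \w_{\Lambda})} \;=\; \ip{\w_{\Lambda}}{\w} - \enormsq{\D\w_{\Lambda}} \;=\; \enormsq{\w_{\Lambda}} - \enormsq{\D\w_{\Lambda}},
\]
using $\D\w = \vct{h}$ and $\ip{\D\w_{\Lambda}}{\vct{h}} = \ip{\w_{\Lambda}}{\D^{*}\vct{h}} = \enormsq{\w_{\Lambda}}$. This exact cancellation is precisely what upgrades $\D$-RIP control of $\enorm{\D\w_{\Lambda}}$ into control of $\enorm{\w_{\Lambda}}$ itself: inserting it together with the $\D$-RIP lower bound $(1-\delta)\enormsq{\D\w_{\Lambda}} \le \enormsq{\A\D\w_{\Lambda}}$, the trivial $\enorm{\D\w_{\Lambda}} \le \enorm{\w_{\Lambda}}$, and $\sum_{j\ge 2}\enorm{\D\w_{T_{j}}} \le \sum_{j\ge 2}\enorm{\w_{T_{j}}} \le \enorm{\w_{\Lambda}} + 2\mathcal{E}$, and rearranging, yields $(1 - c\,\delta)\,\enorm{\w_{\Lambda}} \le C\varepsilon + C'\delta\,\mathcal{E}$. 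Combined with the reduction of the previous paragraph, and with $\delta$ taken below the stated threshold $0.08$ — the value at which the numerical constants close up — this gives $\enorm{\f - \f^{\sharp}} = \enorm{\vct{h}} \le C_{0}\varepsilon + C_{1}\mathcal{E}$, which is the claimed estimate (the absolute constants being optimizable, via a sharper rescaling in the polarization step, to the form in the statement). The only genuinely new ingredient over the classical RIP argument is the telescoping identity above; everything else is bookkeeping.
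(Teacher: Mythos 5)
This theorem is not proved in the paper at all: it is quoted from \cite{RefWorks:60}, and your proposal is essentially a reconstruction of that reference's argument (tube and cone constraints phrased in the analysis coefficients $w=\D^*h$, block decomposition of $w$, expansion of $\|\A\D w_\Lambda\|_2^2$ against $\A h$, and the telescoping identity $\sum_{j\ge 2}\langle \D w_\Lambda,\D w_{T_j}\rangle=\|w_\Lambda\|_2^2-\|\D w_\Lambda\|_2^2$ to absorb the non-orthogonality of the frame pieces). So the route is the expected one rather than a new one, and the overall structure is sound.

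There is, however, one genuine gap as written. The polarization estimate $\left|\langle \A\D w_\Lambda,\A\D w_{T_j}\rangle-\langle \D w_\Lambda,\D w_{T_j}\rangle\right|\le \delta\,\|\D w_\Lambda\|_2\|\D w_{T_j}\|_2$ requires applying the $\D$-RIP to the coefficient vectors $w_\Lambda\pm w_{T_j}$ (suitably rescaled), which are $3s$-sparse because $|\Lambda|=2s$ and $|T_j|=s$; the hypothesis only provides the $\D$-RIP of order $2s$, and there is no generic way to control $\delta_{3s}$ by $\delta_{2s}$. The standard repair is to split $\Lambda=T_0\cup T_1$ and polarize $\A\D w_{T_0}$ and $\A\D w_{T_1}$ against $\A\D w_{T_j}$ separately, each pairing involving only $2s$-sparse combinations, at the cost of roughly a factor $\sqrt{2}$ in the cross-term constant; this is precisely the bookkeeping from which a threshold like $\delta<0.08$ emerges. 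Your key cancellation survives the split, since for $i=0,1$ one has $\sum_{j\ge 2}\langle \D w_{T_i},\D w_{T_j}\rangle=\|w_{T_i}\|_2^2-\langle \D w_{T_i},\D w_\Lambda\rangle$ and these sum to $\|w_\Lambda\|_2^2-\|\D w_\Lambda\|_2^2$, so the rest of your argument closes with adjusted constants. A secondary point: your parenthetical claim that the constants can be optimized to the literal form of the statement (coefficient $1$ on both terms) is not supported by the argument, which yields absolute constants strictly larger than $1$ (roughly $5\varepsilon+2.4\,\mathcal{E}$ even before the splitting penalty); this matches the cited theorem and the way the paper actually uses it, e.g.\ the constants $C_1,C_2$ in \eqref{eq:intrec}, so the discrepancy lies in the paper's abbreviated restatement rather than in your approach.
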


 Thus as indicated in the introduction, one obtains good recovery for signals $\f$ whose analysis coefficients $\D^*\f$ have a suitable decay.

Both the RIP and the $\D$-RIP are closely related to the Johnson-Lindenstrauss lemma~\cite{RefWorks:327,RefWorks:326,RefWorks:86}.  Recall the classical variant of the lemma states that for any $\varepsilon\in (0, 0.5)$ and points $\x_1, \ldots, \x_d \in \mathbb{R}^n$, that there exists a Lipschitz function $f : \mathbb{R}^n \rightarrow \mathbb{R}^m$ for some $m = O(\varepsilon^{-2}\log d)$ such that
\begin{equation}\label{JL}
(1-\varepsilon) \|\x_i - \x_j\|_2^2 \leq \|f(\x_i) - f(\x_j)\|_2^2 \leq (1+\varepsilon)\|\x_i-\x_j\|_2^2
\end{equation}  
for all $i, j \in \{1, \ldots, d\}$.  Recent improvements have been made to this statement which in particular show that the map $f$ can be taken as a random linear mapping which satisfies \eqref{JL} with high probability (see, e.g., \cite{RefWorks:90} for an account of such improvements).
Indeed, any matrix $\A\in\C^{m\times n}$ which for a fixed\footnote{By ``fixed'' we mean to emphasize that this probability bound must occur for a \textit{single} vector $\z$ rather than for all vectors $\z$ like one typically sees in the restricted isometry property.} vector $\z\in\C^n$ satisfies
$$
\P\left((1-\delta)\enormsq{\z} \leq \enormsq{\A\z} \leq (1+\delta)\enormsq{\z}\right) \leq Ce^{-cm}
$$
will satisfy the $\D$-RIP with high probability as long as $m$ is at least on the order of $s\log(n/s)$~\cite{RefWorks:60}.  From this, any matrix satisfying the Johnson-Lindenstrauss lemma will also satisfy the $\D$-RIP (see \cite{RefWorks:86} for the proof of the RIP for such matrices, which directly carries over to the $\D$-RIP).  Random matrices known to have this property include matrices with independent subgaussian entries (such as Gaussian or Bernoulli matrices), see for example \cite{Dasgupta}.  Moreover, it is shown in \cite{RefWorks:230} that any matrix that satisfies the classical RIP will satisfy the Johnson-Lindenstrauss lemma and thus the $\D$-RIP with high probability after randomizing the signs of the columns.  The latter construction allows for structured random matrices with fast multiplication properties such as randomly subsampled Fourier matrices (in combination with the results from \cite{RefWorks:285}) and matrices representing subsampled random convolutions (in 
combination with the results from \cite{rauhut2012restricted, KMR12}); in both cases, however, again with randomized column signs.
 While this gives an abundance of such matrices, as mentioned above, it is not always practical or possible to apply random column signs in the sampling procedure.  

An important general set-up for structured random sensing matrices known to satisfy the regular RIP is the framework of \emph{bounded orthonormal systems}, which includes as a special case the subsampled discrete Fourier transform measurements (without column signs randomized).  Such measurements are the only natural measurements possible in many physical systems where compressive sensing is of interest, such as in MRI, radar, and astronomy \cite{RefWorks:82,RefWorks:79,RefWorks:289,bobin2008compressed}, as well as in applications to polynomial interpolation \cite{rw11, burq2012weighted, rauhut2011sparse} and uncertainty quantification \cite{hampton2015compressive}. In the following, we recall this set-up (in the discrete setting), see \cite[Sec.4]{ra09-1} for a detailed account of bounded orthonormal systems and examples.

\begin{definition}[Bounded orthonormal system]
\label{BONS}
Consider a probability measure $\nu$ on the discrete set $[n]$ and a system $\{ \r_j \in \C^n$, $j \in [n] \}$ that is orthonormal with respect to $\nu$ in the sense that
$$
\sum_{i} \r_k(i) \r_j(i) \nu_i = \delta_{j,k}
$$
where $\delta_{j,k}=\begin{cases}
                             1 & \text{ if } j = k\\
                             0 & \text{ else.}
                            \end{cases}$ 
 is the Kronecker delta function.  Suppose further that the system is uniformly bounded: there exists a constant $K \geq 1$ such that 
\begin{equation}
\label{local}
\sup_{i \in [n]} \sup_{j \in [n]} | \r_j(i) | \leq K.
\end{equation} 
Then the matrix $\A \in \C^{n \times n}$ whose rows are $\r_j$ is called a \emph{bounded orthonormal system} matrix.
\end{definition}

Drawing $m$ indices $i_1, i_2, \dots, i_m$ independently from the orthogonalization measure $\nu$, the sampling matrix $\widetilde{\A} \in \C^{m \times n}$ whose rows are indexed by the (re-normalized) sampled vectors $\sqrt{\frac{1}{m}} \r_j(i_k) \in \C^n$ will have the restricted isometry property with high probability (precisely, with probability exceeding $1-n^{-C \log^3(s)}$) provided the number of measurements satisfies 
\begin{equation}
\label{BOS:RIP:cond}
m \geq CK^2 s \log^2(s) \log(n).
\end{equation}
This result was first shown in the case where $\nu$ is the uniform measure by Rudelson and Vershynin~\cite{RefWorks:285}, for a slightly worse dependence of $m$ on the order of $s\log^2 s \log(s\log n) \log(n)$.  These results were subsequently extended to the general bounded orthonormal system set-up by Rauhut \cite{ra09-1}, and the dependence of $m$ was slightly improved to $s\log^3 s\log n$ in \cite{rauhut2013interpolation}. 

An important special case where these results can be applied is that of {\em incoherence} between the measurement and sampling bases. Here the {\em coherence} between two sets of vectors $\Phi=\{\phi_i\}$ and $\Psi=\{\psi_j\}$  is given by $\mu=\sup_{i,j} |\langle \phi_i, \psi_j\rangle|$. Two orthonormal bases $\Phi$ and $\Psi$ of $\C^n$ are called incoherent if $\mu \leq K n^{-1/2}$. In this case, the renormalized system $\tilde\Phi=\{\sqrt{n} \phi_i \Psi^*\}$ is an orthonormal system with respect to the uniform measure, which is bounded by $K$. Then the above results imply that signals which are sparse in basis $\Psi$ can be reconstructed from inner products with a uniformly subsampled subset of basis $\Phi$. These incoherence-based guarantees are a standard criterion to ensure signal recovery, such results had first been observed in \cite{RefWorks:379}.

\subsection{Generalization to a weighted setup and to local coherence} Recently, the criterion of incoherence has been generalized to the case where only \emph{most} of the inner products between sparsity basis vectors and measurement basis vectors are bounded. If some vectors in the measurement basis yield larger inner products, one can adjust the sampling measure and work with a preconditioned measurement matrix \cite{KW14}, and if some vectors in the sparsity basis yield larger inner products, one can incorporate weights into the recovery schemes and work with a weighted sparsity model \cite{rauhut2013interpolation}. Our results can accommodate both these modifications. In the remainder of this section, we will recall some background on these frameworks from \cite{KW14, rauhut2013interpolation} and formulate our definitions for the more general setup.

Consider a set of positive weights $\omega = (\omega_j)_{j \in [N]}$.  Associate to these weights the norms
\begin{equation}
\label{weighted_norm}
\| \x \|_{\omega,p} := \left( \sum_j | x_j |^p \omega_j^{2-p} \right)^{1/p}, \quad 0 < p \leq 2,
\end{equation}
along with the weighted ``$\ell_0$-norm", or weighted sparsity: $\| \x \|_{\omega,0} := \sum_{j: | x_j | > 0} \omega_j^2$, which equivalently is defined as the limit $\| \x \|_{\omega,0} = \lim_{p \rightarrow 0} \| \x \|_{\omega,p}^p$. We say that a vector $\x$ is {\bf weighted $s$-sparse} if $\| \x \|_{\omega,0} := \sum_{j: | x_j | > 0} \omega_j^2 \leq s$. In line with this definition, the weighted size of a finite set $\Lambda \subset \NN$, is given by $\omega(\Lambda):= \sum_{j\in\Lambda} \omega_j^2$; thus a vector is weighted $s$-sparse if its support has weighted size at most $s$.  When $\omega_j \geq 1$ for all $j$, the weighted sparsity of a vector is at least as large as its unweighted sparsity, so that the class of weighted $s$-sparse vectors is a strict subset of the $s$-sparse vectors. We make this assumption in the remainder.
Note in particular the special cases $\| \x \|_{\omega,1} = \sum_j | x_j | \omega_j$ and $\| \x \|_{\omega,2} = \| \x \|_{2} = \sqrt{\sum_j | x_j |^2};$ by Cauchy-Schwarz, it follows that $ \| \x \|_{\omega,1} \leq \sqrt{s} \| \x \|_2$ if $\x$ is weighted $s$-sparse.   Indeed, we can extend the notions of \unrecoverable~ \eqref{eq:unrec} and $\D$-RIP \eqref{DRIP-orig} to the weighted sparsity setting.  It should be clear from context which definition we refer to in the remainder.

\begin{definition}\label{eq:Wunrec}
For a dictionary $\D\in\C^{n\times N}$, weights $\omega_1, \omega_2, \dots, \omega_N \geq 1$, and a sparsity level $s$, we define the (weighted) \textit{\unrecoverable} as
\begin{equation}
\label{unrec-full}
\unrec_{\omega, s, \D} =\unrec \defby \sup_{\enorm{\D\z}=1, \norm{\z}_{\omega,0}\leq s} \frac{\norm{\D^*\D\z}_{\omega,1}}{\sqrt{s}},
\end{equation}
\end{definition}

\begin{definition}\label{DRIP}
Fix a dictionary $\D\in\C^{n\times N}$, weights $\omega_1, \omega_2, \dots, \omega_N \geq 1$, and matrix $\A\in\C^{m\times n}$.  The matrix $\A$ satisfies the $\D$-$\omega$RIP with parameters $\delta$ and $s$ if
\begin{equation}\label{DRIPbound}
(1-\delta)\enormsq{\D\x} \leq \enormsq{\A\D\x} \leq (1+\delta)\enormsq{\D\x}
\end{equation}
for all {\bf weighted $s$-sparse} vectors $\x\in\C^N$.
\end{definition}
When $\D$ is an orthonormal matrix, this definition reduces to the definition $\omega$RIP from \cite{rauhut2013interpolation}.   
More generally, weights allow the flexibility to incorporate prior information about the support set and allow for weaker assumptions on the dictionary $\D$.    
In particular, a larger weight assigned to a dictionary element will allow for this element to have larger inner products with the measurement basis vectors. In this regard, a basis version of our result is the following variant of a theorem from \cite{rauhut2013interpolation}:

\begin{proposition}[from \cite{rauhut2013interpolation}]
\label{weights:RW}
Fix a probability measure $\nu$ on $[n]$, sparsity level $s < n$, and constant $0 < \delta < 1$.  Let $\D\in\C^{n \times n}$ be an orthonormal matrix.  Let $\A$ be an orthonormal system with respect to $\nu$ as in Definition~\ref{BONS} with rows denoted by $\r_i$, and consider weights $\omega_1, \omega_2, \dots, \omega_n \geq 1$ such that $\max_i | \langle{ \r_i, \d_j \rangle} | \leq \omega_j$.  Construct an $m\times n$ submatrix $\widetilde{\A}$ of $\A$ by sampling rows of $\A$ according to the measure $\nu$ and renormalizing each row by $\sqrt{\frac{1}{m}}$.  Provided
\begin{equation}\label{mbound}
m \geq C\delta^{-2} s \log^3(s) \log(n),
\end{equation}
then with probability $1-n^{c\log^3 s}$, the submatrix $\widetilde{\A}$ satisfies the $\omega$RIP with parameters $s$ and $\delta$.
\end{proposition}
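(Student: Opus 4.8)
The plan is to reduce the statement to the weighted restricted isometry property for randomly subsampled weighted-bounded orthonormal systems proved in \cite{rauhut2013interpolation}; the only genuinely new ingredient is the (routine) observation that precomposing such a system with an orthonormal matrix $\D$ again yields a weighted-bounded orthonormal system with respect to the same measure $\nu$, with boundedness vector $\omega$.

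\emph{Step 1 (reduction to a bounded orthonormal system).} Since $\D$ is orthonormal, $\enorm{\D\x}=\enorm{\x}$, so $\widetilde{\A}$ satisfies the $\D$-$\omega$RIP with parameters $s,\delta$ if and only if the $m\times n$ matrix $\widetilde{\A}\D$ satisfies the (basis) $\omega$RIP with the same parameters. Writing $\r_1,\dots,\r_n$ for the sampling vectors of $\A$, the hypothesis that $\A$ is an orthonormal system with respect to $\nu$ is equivalent to the Parseval relation $\sum_{i=1}^n\nu_i\,\r_i\r_i^*=\Id_n$. Consequently the functions $\psi_j\colon[n]\to\C$ defined by $\psi_j(i):=\langle\r_i,\d_j\rangle$ form an orthonormal system with respect to $\nu$: indeed $\sum_i\nu_i\,\psi_j(i)\overline{\psi_k(i)}=\d_k^*\big(\sum_i\nu_i\,\r_i\r_i^*\big)\d_j=\d_k^*\d_j=\delta_{jk}$, because the columns $\d_j$ of the orthonormal matrix $\D$ are orthonormal; and they are weighted-bounded, $\sup_i|\psi_j(i)|=\max_i|\langle\r_i,\d_j\rangle|\le\omega_j$, by assumption. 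Since $\widetilde{\A}\D$ is (up to complex conjugation of its entries, which does not affect the $\omega$RIP) exactly the matrix with entries $\tfrac{1}{\sqrt m}\psi_j(i_k)$ for $i_1,\dots,i_m$ drawn i.i.d.\ from $\nu$, it is a renormalized random subsample of the weighted-bounded orthonormal system $\{\psi_j\}$, and the proposition follows once we know that such a subsample has the $\omega$RIP under \eqref{mbound}.

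\emph{Step 2 (the $\omega$RIP via chaining).} Let $T=\{\z\in\C^n:\ \|\z\|_{\omega,0}\le s,\ \enorm{\z}\le1\}$ be the set of weighted-$s$-sparse unit vectors; the $\omega$RIP constant of $\tfrac1{\sqrt m}[\psi_j(i_k)]_{k,j}$ equals $\delta_s=\sup_{\z\in T}\big|\enormsq{\widetilde{\A}\D\z}-\enormsq{\z}\big|$, a supremum of an empirical process whose summands each depend on a single sample point. I would bound $\E\,\delta_s$ by symmetrization followed by Dudley's entropy integral (conditionally on the sample points the symmetrized process is a Rademacher, hence sub-Gaussian, process indexed by $\z$), which reduces matters to covering-number estimates for $T$ in the relevant metric — the union over the (possibly exponentially many) weighted-$s$-sparse supports being absorbed by the chaining rather than by a union bound. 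The covering numbers are controlled in two regimes: a volumetric estimate on a fixed low-dimensional coordinate subspace at the smallest scales, and Maurey's empirical method at larger scales, the latter using that every $\z\in T$ satisfies $\|\z\|_{\omega,1}\le\sqrt s$ (Cauchy–Schwarz), so that $i\mapsto\sum_j\psi_j(i)\z_j$ is $\sqrt s$ times a convex combination of the unit-sup-norm functions $\pm\psi_j/\omega_j$. Putting the pieces together gives $\E\,\delta_s\le C\sqrt{s\log^3(s)\log(n)/m}$, which is at most $\delta/2$ under \eqref{mbound}.

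\emph{Step 3 (concentration) and the main obstacle.} One then upgrades the bound on $\E\,\delta_s$ to a high-probability bound using a deviation inequality for suprema of such processes (a Bernstein/Talagrand-type concentration inequality for the induced class of bounded functionals): $\Prob{\delta_s>\E\,\delta_s+t}$ is exponentially small in $mt^2/(s\cdot\mathrm{polylog})$, and taking $t=\delta/2$ together with the bookkeeping of the logarithmic factors produces the failure probability claimed. Combined with Step 1 this proves the proposition. The genuinely delicate part lies entirely inside \cite{rauhut2013interpolation}: carrying out the chaining and covering-number estimates of Step 2 so that the weights $\omega_j$ — which enter both through the definition of weighted sparsity and through the heterogeneous sup-bounds $|\psi_j|\le\omega_j$ — are threaded through correctly, and so that the number of iterated logarithms is kept at $\log^3 s$. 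Our only addition is the verification in Step 1 that composition with an orthonormal $\D$ preserves the weighted-bounded-orthonormal-system structure, after which that theorem applies essentially verbatim.
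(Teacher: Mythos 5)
Your proposal is correct and matches the paper's treatment: the paper imports this proposition from \cite{rauhut2013interpolation}, and your Step 1 reduction (an orthonormal $\D$ composed with the sampled system is again a weighted bounded orthonormal system, so the $\D$-$\omega$RIP is just the $\omega$RIP of $\widetilde{\A}\D$) is exactly the routine observation that makes that citation apply. Your Steps 2--3 reproduce the same symmetrization--Dudley--Maurey--Bernstein machinery that the paper itself uses in Section~\ref{sec:proof} to prove the more general Theorem~\ref{thm:main} (of which this proposition is the $\unrec=1$, orthonormal-$\D$ special case), so there is no substantive divergence from the paper's route.
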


An adjusted sampling density for the case when certain vectors in the measurement basis yield larger inner products is obtained via the {\bf local coherence} as defined in the following.
\begin{definition}[Local coherence, \cite{KW14, rw11}]
The \emph{local coherence} of a set $\Phi=\{\varphi_i\}_{i=1 }^k \subset \C^n$ with respect to another set $\Psi=\{\psi_j\}_{j=1 }^\ell \subseteq \C^n$ is the function $\mu^{\text{loc}}(\Phi, \Psi) \in \R^k$ defined coordinate-wise by 

{
\begin{equation*} \mu^{\text{loc}}_i(\Phi, \Psi) =  \mu^{{\text loc}}_i = \sup\limits_{1\leq j\leq \ell } |\langle \varphi_i, \psi_j\rangle|, \quad \quad  i = 1,2,\dots, k.
\end{equation*}
}
\end{definition}
When $\Phi$ and $\Psi$ are orthonormal bases, and a subset of $\Phi$ is used for sampling while $\Psi$ is the basis in which sparsity is assumed, the main point is that renormalizing the different vectors $\varphi_i$in the sampling basis by respective factors $\frac{1}{\mu^{{\text loc}}_i}$ does not affect orthogonality of the rows of $\Phi^{*} \Psi$. In this way, the larger inner products can be reduced in size. To compensate for this renormalization and retain the orthonormal system property in the sense of Definition~\ref{BONS}, one can then adjust the sampling measure. The resulting preconditioned measurement system with variable sampling density will then yield better bounds $K$ in Equation~\eqref{BOS:RIP:cond}. This yields the following estimate for the restricted isometry property.

\begin{proposition}[\cite{KW14, rw11}]\label{thm:wBOS}
 Let $\Phi=\{\varphi_j\}_{j=1}^n$ and  $\Psi =\{\psi_k\}_{k=1}^n$ be orthonormal bases of $\C^n$. Assume the local coherence of $\Phi$ with respect to $\Psi$  is pointwise bounded by the function $\kappa/\sqrt{n} \in \C^n$, that is  $ \sup\limits_{1\leq k\leq n} |\langle \varphi_j, \psi_k\rangle| \leq \kappa_j$. 
Suppose
 \begin{equation}\label{wBOS:RIP:cond}
m \geq C\delta^{-2} \|\kappa \|_2^2 s \log^3(s) \log(n),
\end{equation}
and choose $m$ (possibly not distinct) indices $j\in [n]$  i.i.d. from the probability measure $\nu$ on $[n]$ given by
\begin{equation*}
 \nu(j) = \frac{\kappa^2_j}{\|\kappa \|_2^2 }.
\end{equation*}
 Call this collection of selected indices $\Omega$ (which may possibly be a multiset).  Consider the matrix $\A \in \C^{m \times n}$ with entries
\begin{equation}\label{def:wPhi:matrix}
A_{j,k} = \langle \varphi_j, \psi_k\rangle, \quad j \in \Omega, k \in [n],
\end{equation}
and consider the diagonal matrix $\W = \operatorname{diag}(\w) \in \C^{n \times n}$ with $w_{j} = \| \kappa \|_2 / \kappa_j$.
Then with probability at least 
$1-n^{-c \log^3(s)},$ 
the preconditioned matrix $\frac{1}{\sqrt{m}} \W \A$ has the restricted isometry property with parameters $\delta$ and $s$.
\end{proposition}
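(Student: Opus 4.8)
The plan is to deduce this from the restricted isometry property for bounded orthonormal systems --- the estimate recalled around \eqref{BOS:RIP:cond}, in its sharpened form with $\log^3(s)\log(n)$ dependence and explicit $\delta^{-2}$ factor --- by exhibiting $\tfrac{1}{\sqrt m}\W\A$ as, literally, a bounded orthonormal system sampling matrix. The point is that the preconditioner $\W$ and the sampling density $\nu$ have been chosen precisely so that this identification goes through with boundedness constant $K = \norm{\kappa}_2$.

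First I would set up the candidate system. For each $k\in[n]$ define $\r_k\in\C^n$, indexed by the sampling variable $j$, by $\r_k(j) = \nu(j)^{-1/2}\ip{\varphi_j}{\psi_k} = (\norm{\kappa}_2/\kappa_j)\ip{\varphi_j}{\psi_k}$; this is well defined because $\kappa_j \geq \sup_k |\ip{\varphi_j}{\psi_k}| > 0$, and $\nu$ is a genuine probability measure since $\sum_j \kappa_j^2 = \norm{\kappa}_2^2$. The key computation is to verify that $\{\r_k\}_{k\in[n]}$ is orthonormal with respect to $\nu$ in the sense of Definition~\ref{BONS}: using the completeness relation $\sum_j \varphi_j(i)\overline{\varphi_j(i')} = \delta_{i,i'}$ for the orthonormal basis $\{\varphi_j\}$ together with orthonormality of $\{\psi_k\}$, one obtains $\sum_j \ip{\varphi_j}{\psi_k}\overline{\ip{\varphi_j}{\psi_{k'}}} = \delta_{k,k'}$, and reinserting the factors $\nu(j)$ turns this into $\sum_j \r_k(j)\overline{\r_{k'}(j)}\,\nu(j) = \delta_{k,k'}$. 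Boundedness is then immediate from the local coherence hypothesis: $|\r_k(j)| \leq (\norm{\kappa}_2/\kappa_j)\,\kappa_j = \norm{\kappa}_2$ for all $j,k$, so the system is uniformly bounded with $K = \norm{\kappa}_2$.

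Next I would check that sampling this system reproduces $\tfrac{1}{\sqrt m}\W\A$. Drawing indices $j_1,\dots,j_m$ i.i.d.\ from $\nu$ (this is the multiset $\Omega$), the bounded orthonormal system sampling matrix has $\ell$-th row $\tfrac{1}{\sqrt m}(\r_1(j_\ell),\dots,\r_n(j_\ell)) = \tfrac{1}{\sqrt m}\tfrac{\norm{\kappa}_2}{\kappa_{j_\ell}}(\ip{\varphi_{j_\ell}}{\psi_1},\dots,\ip{\varphi_{j_\ell}}{\psi_n})$, which by \eqref{def:wPhi:matrix} and $w_{j_\ell} = \norm{\kappa}_2/\kappa_{j_\ell}$ is exactly the $\ell$-th row of $\tfrac{1}{\sqrt m}\W\A$ (with $\W$ understood as the diagonal matrix acting on the rows indexed by $\Omega$). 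At this point I would invoke the known RIP estimate for bounded orthonormal system matrices with bound $K = \norm{\kappa}_2$: the hypothesis \eqref{wBOS:RIP:cond} is precisely \eqref{BOS:RIP:cond} with $K^2 = \norm{\kappa}_2^2$ and the improved logarithmic factors, and the failure probability $n^{-c\log^3(s)}$ is the one supplied by that theorem; this would complete the proof.

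I do not expect a serious obstacle, since the statement is essentially a change of measure together with a matching renormalization. The one thing to be careful about is the bookkeeping linking the weights $w_j$ to the sampling density $\nu(j)$: oversampling the indices $j$ for which $\kappa_j$ is large is exactly cancelled by the $1/\kappa_j$ weighting, and it is this cancellation that simultaneously produces orthonormality with respect to $\nu$ and the clean boundedness constant $\norm{\kappa}_2$. All of the genuine analytic work --- bounding the expected deviation of the associated empirical process by chaining/Dudley-type arguments and then concentrating --- is inherited from the cited bounded orthonormal system RIP theorem and would not be reproduced here.
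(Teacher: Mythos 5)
Your proposal is correct and follows essentially the same route the paper (and the cited works \cite{KW14, rw11}) take: the discussion preceding the proposition explains exactly this identification, namely that preconditioning by $\W$ and resampling according to $\nu(j)=\kappa_j^2/\|\kappa\|_2^2$ turns $\Phi^*\Psi$ into a bounded orthonormal system with constant $K=\|\kappa\|_2$, after which the RIP bound \eqref{BOS:RIP:cond} for such systems is invoked. Your verification of the orthonormality with respect to $\nu$, the bound $|\r_k(j)|\leq\|\kappa\|_2$, and the row-by-row identification of the sampled system with $\tfrac{1}{\sqrt m}\W\A$ is exactly the intended argument.
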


\noindent{\bf Remark.}  
In case $\Phi$ and $\Psi$ are incoherent, or if $\kappa_j = K n^{-1/2}$ uniformly for all $j$, then local coherence sampling as above reduces to the previous results for incoherent systems: in this case, the associated probability measure $\nu$ is uniform, and the preconditioned matrix reduces to $\frac{1}{\sqrt{m}} \W \A = \sqrt{ \frac{n}{m} } \A$. 

Our main result on $\D$-RIP for redundant systems and structured measurements (Theorem \ref{thm:main}) implies a strategy for extending local coherence sampling theory to dictionaries, of which Theorem \ref{thm:intro} is a special case.  Indeed, if $\Psi = \{ \psi_j \}$, more generally, is a Parseval frame and $\Phi = \{\varphi_i \}$ is an orthonormal matrix, then renormalizing the different vectors $\varphi_i$ in the sampling basis by respective factors $\frac{1}{\mu^{{\text loc}}_i}$ still does not affect orthogonality of the rows of $\Phi^{*} \Psi$.   We have the following corollary of our main result; we will establish a more general result, Theorem \ref{thm:main} in the next section, from which both Theorem \ref{thm:intro} and Corollary \ref{thm:variable} follow.

\begin{corollary}\label{thm:variable}
Fix a sparsity level $s < N$, and constant $0 < \delta < 1$.  Let $\D \in\C^{n\times N}$ be a Parseval frame with columns $\{\d_1, \ldots, \d_N\}$,  let $\B$ with rows  $\{\b_1,\dots, \b_n\}$ be an orthonormal basis of $\C^n$, and assume the local coherence of $\D$ with respect to $\B$  is pointwise bounded by the function $\kappa \in \C^n$, that is  $ \sup\limits_{1\leq j \leq N} |\langle \d_j, \b_k\rangle| \leq \kappa_k$. 
 
Consider the probability measure $\nu$ on $[n]$ given by $\nu(k) = \frac{\kappa^2_k}{\|\kappa \|_2^2 }$ 
along with the diagonal matrix $\W = \operatorname{diag}(\w) \in \C^{n \times n}$ with $w_{k} = \| \kappa \|_2 / \kappa_k$.  Note that the renormalized system $\Phi = \frac{1}{\sqrt{m}} \W \B$ is an orthonormal system with respect to $\nu$, bounded by $\| \kappa \|^2$.  
Construct $\widetilde{\B}\in\C^{m\times n}$ by sampling vectors from $\B$ i.i.d. from the measure $\nu$, and consider the \unrecoverable~ $\unrec = \unrec_{\D,s}$ of the frame $\D$ as in Definition~\ref{unrec-orig}, 

As long as 
$$
m \geq C\delta^{-2} \unrec^2 \|\kappa \|_2^2 s \log^3(s \unrec^2) \log(N)
$$
then with probability $1-N^{-\log^3(2s)}$, the following holds for every signal $\f$: the solution $\f^{\sharp}$ of the weighted $\ell_1$-analysis problem
\begin{equation}
 \f^{\sharp}= \argmin_{\tilde \f\in\C^n} \|\D^*\tilde\f\|_1 \text{\quad such that \quad} \| \frac{1}{\sqrt{m}} \W (\widetilde{\B} \tilde\f-\y) \|_2\leq \varepsilon \tag{$P_{1,w}$}, \label{eq:l1anaW}
\end{equation}
with $\y=\B \f+\e$ for noise $\e$ with weighted error $\| \frac{1}{\sqrt{m}} \W \e \|_2 \leq \varepsilon$ satisfies
\begin{equation}
 \|\f^{\sharp}-\f\|_2 \leq C_1 \varepsilon + C_2 \frac{\|\D^*\f - (\D^*\f)_s\|_1}{\sqrt{s}}.
\end{equation}
Here $C, C_1$, and $C_2$ are absolute constants independent of the dimensions and the signal $\f$.
\end{corollary}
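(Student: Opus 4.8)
The plan is to derive Corollary~\ref{thm:variable} from the general $\D$-RIP statement, Theorem~\ref{thm:main}, by an importance-sampling/preconditioning reduction, and then to hand the resulting $\D$-RIP matrix to the deterministic recovery bound of Theorem~\ref{thm:l1anarec}. This is the redundant-dictionary analogue of the way Proposition~\ref{thm:wBOS} is deduced from the incoherent-sampling RIP.

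\textbf{Reduction to Theorem~\ref{thm:main}.} The first and main step is to recognize the preconditioned, importance-sampled ensemble $\A := \tfrac{1}{\sqrt{m}}\W\widetilde{\B}$ as an instance of the random matrices to which Theorem~\ref{thm:main} applies. Uniform sampling of the rows of $\B$ (as in Theorem~\ref{thm:intro}) is replaced by sampling from $\nu(k)=\kappa_k^2/\|\kappa\|_2^2$, combined with the diagonal rescaling $\W=\diag(\|\kappa\|_2/\kappa_k)$. As already indicated in the discussion preceding the corollary, the point is that this diagonal rescaling of the sampling vectors and the reweighting of the sampling measure cancel exactly, so that the orthogonality relations feeding into Theorem~\ref{thm:main} are left intact: for $j,l\in[N]$,
$$ \sum_{k}\nu_k\cdot\frac{\|\kappa\|_2}{\kappa_k}\overline{\langle\b_k,\d_j\rangle}\cdot\frac{\|\kappa\|_2}{\kappa_k}\langle\b_k,\d_l\rangle \;=\; \sum_{k}\overline{\langle\b_k,\d_j\rangle}\langle\b_k,\d_l\rangle \;=\; (\D^*\D)_{j,l}. $$
Thus the frame Gram matrix $\D^*\D$ --- and hence the \unrecoverable~$\unrec=\unrec_{s,\D}$ of Definition~\ref{unrec-orig} --- is unchanged by the reweighting, while the local-coherence bound becomes \emph{uniform}: $\tfrac{\|\kappa\|_2}{\kappa_k}|\langle\b_k,\d_j\rangle|\le\|\kappa\|_2$ for all $k\in[n]$, $j\in[N]$, so that $\|\kappa\|_2$ plays the role that $K$ plays in Theorem~\ref{thm:intro}. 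Feeding this into Theorem~\ref{thm:main} with trivial dictionary weights $\omega_j\equiv 1$ then yields: as soon as $m\ge C\delta^{-2}\unrec^2\|\kappa\|_2^2\,s\log^3(s\unrec^2)\log(N)$, the matrix $\A=\tfrac{1}{\sqrt{m}}\W\widetilde{\B}$ has the $\D$-RIP with parameters $2s$ and $\delta$ with probability at least $1-N^{-\log^3(2s)}$. As a consistency check, in the incoherent case $\kappa_k\equiv Kn^{-1/2}$ one has $\|\kappa\|_2^2=K^2$, $\nu$ uniform, $\W=\sqrt{n}\,\Id$, and one recovers Theorem~\ref{thm:intro}. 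I would fix $\delta$ to be some constant strictly below $0.08$, absorbing the factor $\delta^{-2}$ into $C$.

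\textbf{From $\D$-RIP to recovery.} On the high-probability event just described it only remains to note that the program \eqref{eq:l1anaW} is exactly the $\ell_1$-analysis program \eqref{eq:l1ana} for the matrix $\A=\tfrac{1}{\sqrt{m}}\W\widetilde{\B}$: writing $\widetilde{\e}$ for the noise on the sampled coordinates, the effective measurements of $\f=\D\z$ are $\A\f$ with additive noise $\tfrac{1}{\sqrt{m}}\W\widetilde{\e}$ of Euclidean norm at most $\varepsilon$ by hypothesis, and the constraint in \eqref{eq:l1anaW} reads $\|\A\tilde\f-(\A\f+\tfrac{1}{\sqrt{m}}\W\widetilde{\e})\|_2\le\varepsilon$. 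Since $\A$ has the $\D$-RIP with parameters $2s$ and $\delta<0.08$, Theorem~\ref{thm:l1anarec} applies verbatim and yields $\|\f^{\sharp}-\f\|_2\le\varepsilon+\|\D^*\f-(\D^*\f)_s\|_1/\sqrt{s}$, which is the asserted error bound (with $C_1=C_2=1$, or with absolute constants if $\delta$ is kept generic).

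\textbf{Where the difficulty lies.} All of the real work sits in Theorem~\ref{thm:main}, whose proof --- a chaining/Dudley-type estimate for the random process controlling the $\D$-RIP constant of a subsampled bounded-orthonormal-type ensemble, with the measurement count tracking the \unrecoverable~$\unrec$ --- is carried out in the final section. Once Theorem~\ref{thm:main} is available, the corollary is essentially bookkeeping; the one spot that genuinely requires care is keeping the $\sqrt{n}$, $\sqrt{m}$ and $\W$ normalizations consistent in the reduction above, so that the effective coherence constant is correctly $\|\kappa\|_2$ and $\nu$ is a bona fide probability measure turning the reweighted ensemble into an admissible input for Theorem~\ref{thm:main}.
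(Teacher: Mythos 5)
Your overall route is the same as the paper's: precondition by $\W$, recognize $\frac{1}{\sqrt m}\W\B$ sampled from $\nu$ as an orthonormal system in the sense of Definition~\ref{BONS}, invoke Theorem~\ref{thm:main} to get the $\D$-RIP at sparsity $2s$, and finish with Theorem~\ref{thm:l1anarec} applied to $\A=\frac{1}{\sqrt m}\W\widetilde{\B}$ and noise $\frac{1}{\sqrt m}\W\e$. The second half of your argument (identifying \eqref{eq:l1anaW} as plain $\ell_1$-analysis for this $\A$) is fine. The gap is in the middle step, where the factor $\|\kappa\|_2^2$ is supposed to come from. You apply Theorem~\ref{thm:main} ``with trivial dictionary weights $\omega_j\equiv 1$'' and assert that $\|\kappa\|_2$ ``plays the role of $K$''. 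But Theorem~\ref{thm:main} has no boundedness constant $K$ anywhere: its only coherence hypothesis is \eqref{eq:wts}, $\max_i|\langle \r_i,\d_j\rangle|\leq\omega_j$, and its measurement bound \eqref{mbound3} contains no coherence factor at all. With $\omega_j\equiv 1$ the hypothesis \eqref{eq:wts} fails for the preconditioned rows, whose inner products with the atoms are only bounded by $\|\kappa\|_2\geq 1$, and the conclusion you quote (with $\|\kappa\|_2^2$ inserted into the sample complexity) is not the conclusion of the theorem you are citing; you have imported the factor by analogy with Theorem~\ref{thm:intro} rather than derived it.

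The repair is exactly what the paper does: take \emph{uniform weights} $\omega_j=\|\kappa\|_2$ for all $j$, so that \eqref{eq:wts} holds for the preconditioned system, and apply Theorem~\ref{thm:main} at the \emph{weighted} sparsity level $s'=2s\|\kappa\|_2^2$. With uniform weights, $\|\x\|_{\omega,0}=\|\kappa\|_2^2\,\|\x\|_0\leq s'$ is the same as unweighted $2s$-sparsity, so the $\D$-$\omega$RIP of Definition~\ref{DRIP} at level $s'$ is precisely the unweighted $\D$-RIP at level $2s$ needed for Theorem~\ref{thm:l1anarec}; moreover $\|\cdot\|_{\omega,1}=\|\kappa\|_2\|\cdot\|_1$, so the weighted localization factor of Definition~\ref{eq:Wunrec} at level $s'$ collapses to the unweighted $\unrec$ (your cancellation computation showing that the preconditioning leaves $\D^*\D$ untouched is the right supporting observation here, just aimed at the wrong weights). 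Plugging $s'\unrec^2=2s\|\kappa\|_2^2\unrec^2$ into \eqref{mbound3} then \emph{produces} the stated condition $m\gtrsim \delta^{-2}\|\kappa\|_2^2 s\,\unrec^2\log^3(s\unrec^2)\log N$ (up to constants and the harmless $\|\kappa\|_2^2$ inside the logarithm), rather than having it postulated. In short: right architecture, but the weights in the appeal to Theorem~\ref{thm:main} must be $\omega_j=\|\kappa\|_2$, not $1$, since that is the only mechanism in the theorem through which the local coherence can enter the measurement count.
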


The resulting bound involves a weighted noise model, which can, in the worst case, introduce an additional factor of 
$\sqrt{\frac{n}{m}}\max_i \sqrt{\nu(i)}$.   We believe however that this noise model is just an artifact of the proof technique, and that the stability results in Corollary \ref{thm:variable} should hold for the standard noise model using the standard $\ell_1$-analysis problem \eqref{eq:l1ana}.  In the important case where $\D$ is a wavelet frame and $\B$ is the orthonormal discrete Fourier matrix, we believe that \emph{total variation minimization}, like $\ell_1$-analysis, should give stable and robust error guarantees with the standard measurement noise model.  Indeed, such results were recently obtained for variable density sampling in the case of orthonormal wavelet sparsity basis \cite{poon14}, improving on previous bounds for total variation minimization \cite{KW14, needell2013stable, needell2013near, adcock2013breaking}.   Generalizing such results to the dictionary setting is indeed an interesting direction for future research.

	\section{Our main result and its applications}\label{sec:main}
As mentioned in the previous section, the case of incoherence between the sampling basis and the sparsity dictionary, as in Theorem~\ref{thm:intro}, is a special case of a bounded orthonormal system. The following more technical formulation of our main result in the framework of such systems covers both weighted sparsity models and local coherence and as we will see, implies Theorem~\ref{thm:intro}.  The proof of Theorem~\ref{thm:main} is presented in Section~\ref{sec:proof}.

\begin{theorem}\label{thm:main}
Fix a probability measure $\nu$ on $[N]$, sparsity level $s < N$, and constant $0 < \delta < 1$.  Let $\D\in\C^{n\times N}$ be a Parseval frame.  Let $\A$ be an orthonormal systems matrix with respect to $\nu$ with rows $\r_i$ as in Definition~\ref{BONS}, and consider weights $\omega_1, \omega_2, \dots, \omega_N \geq 1$ such that 
\begin{equation}\label{eq:wts}
\max_i | \langle{ \r_i, \d_j \rangle} | \leq \omega_j.
\end{equation}
 Define the unrecoverable energy $\unrec = \unrec_{\D,s}$ as in Definition~\ref{unrec-full}.  Construct an $m\times n$ submatrix $\widetilde{\A}$ of $\A$ by sampling rows of $\A$ according to the measure $\nu$.  Then as long as 
\begin{eqnarray}\label{mbound3}
m &\geq& C\delta^{-2} s \unrec^2\log^3(s \unrec^2) \log(N), \quad \text{and} \nonumber \\
m &\geq& C\delta^{-2} s \unrec^2 \log(1/\gamma)
\end{eqnarray}
then with probability $1-\gamma$, the normalized submatrix $\sqrt{\tfrac{1}{m}} \widetilde{\A}$ satisfies the $\D$-$\omega$RIP with parameters $s$ and $\delta$.
\end{theorem}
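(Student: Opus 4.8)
The plan is to follow the by-now standard route for establishing the restricted isometry property of a subsampled bounded orthonormal system --- reduction to a supremum of a mean-zero empirical process, Rademacher symmetrization, Dudley's inequality (generic chaining), Maurey-type covering-number estimates, and a Talagrand-type deviation inequality --- but carried out relative to the dictionary, with the sparsity parameter $s$ replaced throughout by the \emph{effective sparsity} $s\unrec^2$. Concretely, write $\r_{i_1},\dots,\r_{i_m}$ for the measurement vectors drawn i.i.d.\ from $\nu$, so that $\enormsq{\sqrt{1/m}\,\widetilde{\A}\D\x}=\frac1m\sum_{k=1}^m\abs{\ip{\r_{i_k}}{\D\x}}^2$. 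Since $\A$ is orthonormal with respect to $\nu$ we have $\E_{i\sim\nu}\r_i\r_i^*=\Id_n$, hence $\E_{i\sim\nu}\abs{\ip{\r_i}{\D\x}}^2=\enormsq{\D\x}$, and so the $\D$-$\omega$RIP constant of $\sqrt{1/m}\,\widetilde{\A}$ at sparsity $s$ is
\[
  \delta_s=\sup_{\vct{u}\in T}\Bigl|\frac1m\sum_{k=1}^m\bigl(\abs{\ip{\r_{i_k}}{\vct{u}}}^2-\enormsq{\vct{u}}\bigr)\Bigr|,\qquad T:=\{\D\x:\ \norm{\x}_{\omega,0}\le s,\ \enorm{\D\x}\le 1\}.
\]
I would prove that the first condition on $m$ forces $\E\,\delta_s\le\delta/2$ and the second forces $\delta_s\le\E\,\delta_s+\delta/2$ with probability at least $1-\gamma$.

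The Parseval and localization hypotheses enter at a single point: a uniform pointwise bound on the process. Using $\D\D^*=\Id_n$ we may rewrite $\ip{\r_i}{\vct{u}}=\ip{\D^*\r_i}{\D^*\vct{u}}$; by \eqref{eq:wts} the $j$-th coordinate of $\D^*\r_i$ equals $\ip{\d_j}{\r_i}$, of modulus at most $\omega_j$; hence by H\"older's inequality and Definition~\ref{eq:Wunrec},
\[
  \abs{\ip{\r_i}{\vct{u}}}\ \le\ \norm{\D^*\vct{u}}_{\omega,1}\ =\ \norm{\D^*\D\x}_{\omega,1}\ \le\ \unrec\sqrt{s}\,\enorm{\D\x}\ \le\ \unrec\sqrt{s}\qquad\text{for all }\vct{u}\in T.
\]
Since $\D^*$ is an isometry, $\enorm{\D^*\vct{u}}=\enorm{\vct{u}}\le 1$, so $\{\D^*\vct{u}:\vct{u}\in T\}\subseteq\sqrt{s\unrec^2}\,\mathcal{B}_{\omega,1}\cap\mathcal{B}_2$, while $\E_{i\sim\nu}(\D^*\r_i)(\D^*\r_i)^*=\D^*\D$ acts as the identity on $\range(\D^*)$. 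In short, the problem has become exactly an $\omega$RIP problem at the inflated sparsity $s\unrec^2$ for the uniformly $\omega$-bounded system $\{\D^*\r_i\}$ and the $\ell_1$-type coefficient body $\{\D^*\vct{u}:\vct{u}\in T\}$ --- which is the reason $s$ becomes $s\unrec^2$ in the measurement count.

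With this in hand I would re-run the chaining argument behind Proposition~\ref{weights:RW} (as in \cite{rauhut2013interpolation,KW14}) with $s\rightsquigarrow s\unrec^2$. After Rademacher symmetrization, $\E\,\delta_s\le 2\,\E\sup_{\vct{u}\in T}\bigl|\frac1m\sum_k\xi_k\abs{\ip{\r_{i_k}}{\vct{u}}}^2\bigr|$; Talagrand's contraction principle applied to $t\mapsto t^2$, which is $2\unrec\sqrt{s}$-Lipschitz on the range of the process by the pointwise bound above, reduces this (up to the factor $2\unrec\sqrt{s}$) to the expected supremum of a conditionally sub-Gaussian process, which Dudley's inequality controls through $\int_0^{\infty}\sqrt{\log\mathcal{N}(T,\rho,t)}\,dt$ in the seminorm $\rho(\vct{u},\vct{u}')^2=\frac1m\sum_k\abs{\ip{\r_{i_k}}{\vct{u}-\vct{u}'}}^2$. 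The covering numbers of $T$ --- equivalently of $\sqrt{s\unrec^2}\,\mathcal{B}_{\omega,1}\cap\mathcal{B}_2$ in the relevant metrics --- are estimated by the usual two-regime split: a volumetric bound at small scales and the weighted empirical (Maurey/Carl) bound of \cite{rauhut2013interpolation} at large scales, the latter giving $\log\mathcal{N}\lesssim s\unrec^2\log N\log(\cdots)$, where the $\log N$ records the size of the dictionary used in the sparse-approximation step. Integrating and collecting the iterated logarithmic factors yields $\E\,\delta_s\lesssim\sqrt{s\unrec^2\log^3(s\unrec^2)\log N/m}$ (plus lower-order terms), so the first hypothesis on $m$ gives $\E\,\delta_s\le\delta/2$. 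Finally, a Talagrand/Bousquet deviation inequality for suprema of empirical processes --- using once more the uniform bound $\unrec\sqrt{s}$ together with $\E\abs{\ip{\r_i}{\vct{u}}}^4\le\unrec^2 s\,\enormsq{\vct{u}}\le\unrec^2 s$ --- shows $\delta_s>\E\,\delta_s+\delta/2$ with probability at most $\gamma$ once $m\ge C\delta^{-2}s\unrec^2\log(1/\gamma)$. Combining the two estimates gives $\delta_s\le\delta$ with probability $\ge 1-\gamma$.

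I expect the main obstacle to be the covering-number (equivalently $\gamma_2$-functional) estimate for $T$ in the weighted setting, together with the verification that the redundancy of $\D$ costs nothing beyond $s\rightsquigarrow s\unrec^2$: the coefficient body $\{\D^*\D\x\}$ is not a set of genuinely weighted $s\unrec^2$-sparse vectors but only the $\ell_1$-type body $\sqrt{s\unrec^2}\,\mathcal{B}_{\omega,1}\cap\mathcal{B}_2$, and $\{\D^*\r_i\}$ is tight only on $\range(\D^*)$, so one must check that the weighted Maurey/Carl machinery applies with the right constants and --- crucially --- without introducing any extra logarithmic factor, so the exponent of the logarithm matches the unweighted bounded-orthonormal-system bound. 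The symmetrization/contraction and the deviation steps are comparatively routine.
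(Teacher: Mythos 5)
Your skeleton is largely the paper's: the reduction of the $\D$-$\omega$RIP constant to the supremum of a mean-zero empirical process over $T=\{\D\x:\norm{\x}_{\omega,0}\le s,\ \enorm{\D\x}=1\}$, symmetrization, a Dudley entropy bound with a two-regime covering estimate (volumetric at small scales, Maurey/Khintchine at large scales, with the $\log N$ coming from the $4N$ signed, weighted dictionary columns), and a Bousquet-type deviation inequality for the second condition on $m$ --- and your key structural ingredient, the pointwise bound $\abs{\ip{\rtilde_i}{\vct{u}}}=\abs{\ip{\D^*\rtilde_i}{\D^*\D\x}}\le\norm{\D^*\D\x}_{\omega,1}\le\sqrt{s}\,\unrec$ via the Parseval property and \eqref{eq:wts}, is exactly the paper's bound \eqref{old88}. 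The genuine gap is the step you call routine: handling the square by Talagrand's contraction principle with the \emph{global} Lipschitz constant $2\sqrt{s}\,\unrec$ and then chaining the remaining linear Rademacher process in the $\ell_2$-type empirical seminorm $\rho$. That combination cannot deliver \eqref{mbound3}. After contraction you are left with $\E\delta_s\lesssim\frac{\sqrt{s}\,\unrec}{m}\,\E_\epsilon\sup_{u\in T}\bigl|\sum_k\epsilon_k\abs{\ip{\rtilde_{i_k}}{u}}\bigr|$, and Dudley in $\rho$ bounds the inner supremum by (essentially) $\sqrt{m\,s\unrec^2\log N}$ times logarithmic factors; moreover it cannot do better within your plan, since already at constant scales the covering numbers of $T$ in $\rho$ are volumetric, $\log\N(T,\rho,t)\gtrsim s\log(N/s)$, once $\widetilde\A$ is near-isometric on $T$ (the regime of interest), and the increments carry the scale $\sqrt{m}$. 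The product then gives $\E\delta_s\lesssim s\unrec^2\sqrt{\log N/m}\cdot\mathrm{polylog}$, i.e.\ a requirement $m\gtrsim\delta^{-2}(s\unrec^2)^2\log N$ --- quadratic in the effective sparsity. Already in the classical case $\D=\Id_n$, $\unrec=1$, this pipeline yields $m\gtrsim s^2$, so it cannot prove the theorem: you pay for the richness of $T$ twice, once in the worst-case Lipschitz constant and once in the entropy integral.

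The paper avoids contraction altogether. The increment of the quadratic process is factored, via the reverse triangle inequality and H\"older with conjugate exponents $p,q$ as in \eqref{continue}, into a \emph{data-dependent} heavy factor $\sup_{u\in T}\bigl(\sum_i\abs{\ip{\rtilde_i}{u}}^{2p}\bigr)^{1/(2p)}\le(s\unrec^2)^{(p-1)/(2p)}\bigl(\sup_{u}\sum_i\abs{\ip{\rtilde_i}{u}}^{2}\bigr)^{1/(2p)}$ --- which is not bounded statically but recycled at the end through the self-bounding inequality $\E\delta_s\lesssim\sqrt{\cdots/m}\,\sqrt{\E\delta_s+1}$ --- times the chaining metric $\norm{\cdot}_{X,q}$, in which the Maurey/Khintchine covering bound costs only $\sqrt{q}\,m^{1/(2q)}$. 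Choosing $p=1+1/\log(s\unrec^2)$ and $q=1+\log(s\unrec^2)$ renders both factors harmless and produces precisely the $\log^3(s\unrec^2)\log N$ in \eqref{mbound3}; this is the device the paper imports from \cite{rauhut2013interpolation,cheraghchi2013restricted}. So to repair your argument you must replace the contraction step by this $p,q$-H\"older splitting (or an equivalent chaining in an $X_{2q}$-type metric with self-bounding); your concluding deviation step and your localization-factor bound can then be kept as they are, since they coincide with the paper's.
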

 \subsection*{Proof of Theorem~\ref{thm:intro} assuming Theorem~\ref{thm:main}:} 
Let $\B$, $\D$ and $s$ be given as in Theorem~\ref{thm:intro}.  We will apply Theorem~\ref{thm:main} with $\omega_j = 1$ for all $j$, with $\nu$ the uniform measure 
on $[n]$, sparsity level $2s$, $\gamma = N^{-\log^3 (2s)}$, $\delta = 0.08$, and matrix $\A=\sqrt{n}\B$.  We first note that since $\B$ is an orthonormal basis, that $\sqrt{n}\B$ is an orthonormal systems matrix with respect to the uniform measure $\nu$ as in Definition~\ref{BONS}.  In addition, \eqref{eq:incoh} implies \eqref{eq:wts} for matrix $\A=\sqrt{n}\B$ with $K=\omega_j = 1$.  Furthermore, setting $\gamma = N^{-\log^3(2s)}$, \eqref{mboundloc} implies both inequalities of~\eqref{mbound3} (adjusting constants appropriately).  Thus the assumptions of Theorem~\ref{thm:main} are in force.  Theorem~\ref{thm:main} then guarantees that with probability $1-\gamma  = 1 - N^{-\log^3 (2s)}$, the uniformly subsampled matrix $\sqrt{\frac{1}{m}}(\sqrt{n}\widetilde{\B})$ satisfies the $\D$-$\omega$RIP with parameters $2s$ and $\delta$ and weights $\omega_j = 1$.  A simple calculation and the definition of the $\D$-$\omega$RIP (Definition~\ref{DRIP}) shows that this implies the $\D$-RIP with parameters $2s$ and $\delta$.  By Theorem~\ref{thm:l1anarec}, \eqref{eq:intrec} holds and this completes the proof.
\hfill $\square$ 
\subsection*{Proof of Corollary \ref{thm:variable} assuming Theorem \ref{thm:main}}
Corollary \ref{thm:variable} is implied by Theorem~\ref{thm:main} because the preconditioned matrix $\Phi = \frac{1}{\sqrt{m}} \W \B$ formed by normalizing 
the $i$th row of $\B$ by  $ \| \kappa \|_2 / \kappa_j$ constitutes an orthonormal system with respect to the probability measure $\nu(i)$ and uniform weights $\omega_j = \| \kappa \|_2$.
Then the preconditioned sampled matrix $\widetilde{\Phi}$ satisfies the $\D$-RIP for sparsity $2s$ and parameter $\delta$ according to Theorem \ref{thm:main} with probability $1-\gamma  = 1 - N^{-\log^3 (2s)}$.  Applying Theorem~\ref{thm:l1anarec} to the preconditioned sampling matrix $\widetilde{\Phi}$ and Parseval frame $\D$ produces the results in Corollary \ref{thm:variable}.
\hfill $\square$ 

\subsection{Example: Harmonic frame with $L$ more vectors than dimensions}

 It remains to find examples of a dictionary with bounded \unrecoverable~ and an associated measurement system for which incoherence condition \eqref{eq:incoh} holds. Our main example is that of sampling a signal that is sparse in an oversampled Fourier system, a so-called {\em harmonic frame} \cite{vale2004tight}; the measurement system is the standard basis.  Indeed, one can see by direct calculation that the standard basis is incoherent in the sense of \eqref{eq:incoh} to any set of Fourier vectors, even of non-integer frequencies.
We will now show that if the number of frame vectors exceeds the dimension only by a constant, such a dictionary will also have bounded \unrecoverable. This setup is a simple example, but our results apply, and it is not covered by previous theory.  

More precisely, we fix $L\in\NN$ and consider $N=n+L$ vectors in dimension $n$. We assume that $L$ is such that $Ls\leq \tfrac{N}{4}$. 
Then the harmonic frame is defined via its frame matrix $\D=(d_{jk})$, which results from the $N\times N$ discrete Fourier transform matrix $\F=(f_{jk})$ by deleting the last $L$ rows. That is, we have 
\begin{equation}\label{harmonic}
d_{jk} = \tfrac{1}{\sqrt{n+L}}\exp(\tfrac{2\pi i j k}{n+L})
\end{equation} 
for $j=1\ldots n$ and $k=1\ldots N$. The corresponding Gram matrix satisfies $(\D^*\D)_{ii}=\tfrac{n}{n+L}$ and, for $j\neq k$, by orthogonality of $\F$,
\[
        |(\D^*\D)_{jk}|=  \left|(\F^*\F)_{jk}  -   \sum_{\ell=n+1}^N \bar f_{\ell j} f_{\ell k}   \right|  = \left|  \sum_{\ell=n+1}^N \tfrac{1}{n+L} \exp(\tfrac{-2\pi i \ell j}{n+L}) \exp(\tfrac{2\pi i \ell k}{n+L})   \right|  \leq   \tfrac{L}{n+L}                                         
\]
As a consequence, we have for $\z$ $s$-sparse and $j\notin \supp \z$
\[
 |(\D^*\D \z)[j]| = \left|\sum_{k\in\supp\z} (\D^*\D)_{jk} \z[k] \right|\leq \tfrac{L}{n+L} \|\z\|_1
\]
where we write $\z[k]$ to denote the $k$th index of the vector $\z.$
Similarly, for $j\in \supp \z$, we obtain
\[
 |(\D^*\D \z)[j] - \z[j]| = \left|\sum_{k\in\supp\z, k\neq j} (\D^*\D)_{jk} \z[k] \right|\leq \tfrac{L}{n+L} \|\z\|_1.
\]
So we obtain, using that $\D^*$ is an isometry,
\begin{align*}
 \|\D\z\|^2_2 & =\|\D^*\D\z\|^2_2\geq \sum _{k\in \supp\z}\left( (\D^*\D \z)[k] \right)^2 
\geq \sum _{k\in \supp\z}(|\z[k]|-|(\D^*\D \z)[k] - \z[k]|)^2 \\
&\geq \sum _{k\in \supp\z}|\z[k] |^2 - 2 \tfrac{L}{n+L} \|\z\|_1 |\z[k] |\\ 
&= \|\z\|_2^2-\tfrac{2L}{n+L} \|\z\|_1^2 \geq (1-\tfrac{2Ls}{N})\|\z\|_2^2\geq \tfrac{1}{2} \|\z\|_2^2.
\end{align*}
That is, for $\z$ with $\|\D\z\|_2=1$, one has $\|\z\|_2\leq \sqrt{2}$.   Consequently, 
\begin{align*}
 \unrec&=\sup_{\enorm{\D\z}=1, \norm{\z}_{0}\leq s} \frac{\norm{\D^*\D\z}_{1}}{\sqrt{s}} = \sup_{\enorm{\D\z}=1, \norm{\z}_{0}\leq s} \frac{\norm{(\D^*\D\z)|_{(\supp \z)}}_1 + \norm{(\D^*\D\z)|_{(\supp \z)^c}}_{1}}{\sqrt{s}}\\ 
&\leq \sup_{\enorm{\D\z}=1, \norm{\z}_{0}\leq s} \norm{\D^*\D\z}_2 + \frac{\norm{(\D^*\D\z)|_{(\supp \z)^c}}_{1}}{\sqrt{s}} = 1 + \sup_{\enorm{\D\z}=1, \norm{\z}_{0}\leq s}\tfrac{1}{\sqrt{s}} \sum_{j\in(\supp \z)^c} |(\D^*\D \z)[j]| \\
&\leq 1 + \sup_{\enorm{\D\z}=1, \norm{\z}_{0}\leq s}\tfrac{1}{\sqrt{s}} (N-s)\tfrac{L\|\z\|_1}{N} \leq 1 + \tfrac{(N-s)L\|\z\|_2}{N}  \leq 1 + L \sqrt{2}.
\end{align*}
\hfill $\square$ 

\noindent{\bf Remark.  }
We note here that the dependence on $L$ seems quite pessimistic.  Indeed, one often instead considers the redundancy of the dictionary, namely $r=N/n$.  In this case, even for redundancy $r=2$ we would have $L=n$ and require a full set of measurements according to this argument.  However, we conjecture that rather than a dependence on $L$ one can reduce the dependence to one on the redundancy $r$.  Intuitively, due to the restriction of Parseval normalization on the frame $\D$, there is a tradeoff between redundancy and coherence; an increase in redundancy should be in some way balanced by a decrease in coherence.  This conjecture is supported by numerical experiments and by theoretical lower bounds for robust recovery in the co-sparse model \cite{GPV14}, but we leave a further investigation to future work.

This example also plays an important role in the so-called \textit{off the grid} compressed sensing setting \cite{tang2012compressive,stoica2012sparse}.  In this framework, the signal frequencies are not assumed to lie on a lattice but instead can assume any values in a continuous interval.  When the signal parameters do not lie on a lattice, the signal may not be truly sparse in the discrete dictionary, and refining the grid to incorporate finer frequencies may lead to numerical instability.  In addition, typical compressed results are difficult to apply under discretization, making off the grid approaches advantageous.  The example we discuss above gives a possible compromise to nearly on the grid recovery from linear measurements.

Another line of research that relates to this example is that of superresolution (see, e.g., \cite{reading2, reading1} and many followup works). These works study the recovery of frequency sparse signals from equispaced samples. No assumptions regarding an underlying grid are made, but rather one assumes a separation distance between the active frequencies. In this sense, the nearly-on-the-grid example just discussed satisfies their assumption as the corresponding separation distance is close to the grid spacing. However, the nature of the resulting guarantees is somewhat different. For example, in these works, every deviation from an exactly sparse signal must be treated as noise, whereas in our result above we have an additional term capturing the compressibility of a signal. We think that an in-depth comparison of the two approaches is an interesting topic for future work.

	\subsection{Example: Fourier measurements and Haar frames of redundancy 2}\label{sec:variable}

In this subsection, we present a second example of a sampling setup that satisfies the assumptions of incoherence and \unrecoverable~ of Theorem~\ref{thm:main}. In contrast to the previous example, one needs to precondition and adjust the sampling density to satisfy these assumptions according to Corollary \ref{thm:variable}, which allows us to understand the setup of the Fourier measurement basis and a 1D Haar wavelet frame with redundancy 2, as introduced in the following.
Let $n = 2^p$. Recall that the univariate discrete Haar wavelet basis of $\C^{2^p}$ consists of $h^0 = 2^{-p/2} (1, 1, \dots, 1),  h=h_{0,0}=2^{-p/2} (1,1,\dots, 1,-1,-1,\dots, -1)$ and the frame basis elements $h_{\ell,k}$ given component wise by
\begin{align*}
  h_{\ell,k}[j] = h[2^{\ell} j -k]
  &= \begin{cases}
              2^{\frac{\ell-p}{2}} \quad&\text{for }\qquad \qquad \qquad  k2^{p-\ell} \leq j< k2^{p-\ell} +2^{p-\ell-1}\\
	      -2^{\frac{\ell-p}{2}} \quad&\text{for }\quad\  k2^{p-\ell} +2^{p-\ell-1}\leq j < k2^{p-\ell} +2^{p-\ell}\\
              0\quad &\text{else,}
	\end{cases}
\end{align*}
for any $(\ell,k)\in\mathbb{Z}^2$ satisfying $0<\ell<p$ and $0\leq k<2^\ell$. The corresponding basis transformation matrix is denoted by $\H$.

One can now create a {\em wavelet frame of redundancy 2} by considering the union of this basis
and a circular shift of it by one index. That is, one adds the vector
$ \tilde{h}^{0} = h^{0} = 2^{-p/2}(1,1,\dots,1)$ and vectors of the form $\tilde{h}_{\ell,k}[j] = h_{\ell,k}[j+1]$ for all 
$(\ell,k)$.  Here we identify $2^p + 1 = 1$.  This is also an orthonormal basis -- its basis transformation matrix will be denoted by $\tH$ in the following, and the matrix $\D \in \mathbb{C}^{2^p \times 2^{p+1}}$ with columns
\begin{eqnarray}
\label{frameHaar}
\D(:,(\ell,2k-1)) &=&  \frac{1}{\sqrt{2}} h_{\ell,k},  \nonumber \\
\D(:,(\ell,2k)) &=& \frac{1}{\sqrt{2}} \tilde{h}_{\ell,k}
\end{eqnarray}
forms a Parseval frame with redundancy 2.  

Corollary \ref{thm:variable} applies to the example where sparsity is with respect to the redundant Haar frame and where sampling measurements are rows $\{ \r_k \}$ from the $n \times n$ orthonormal DFT matrix. Indeed, following Corollary 6.4 of \cite{KW14}, we have the following coherence estimates:
$$
\max_{\ell, j} \{ | \< \r_k, h_{\ell,j} \> |, | \< \r_k, \tilde{h}_{\ell,j} \> | \} \leq \kappa_k := 3 \sqrt{2\pi} / \sqrt{k}.
$$
Since $\| \kappa \|_2^2 = 18\pi \sum_{k=1}^n k^{-1} \leq 18 \pi \log_2(n)$ grows only mildly with $n$, the Fourier / wavelet frame example is a good fit for Corollary \ref{thm:variable}, provided the \unrecoverable~ of the Haar frame is also small.  We will show that the \unrecoverable~ of the Haar frame is bounded by $\unrec \leq  \sqrt{2\log_2(n)}$, leading to the following corollary.

\begin{corollary}
\label{fourier:haar}
Fix a sparsity level $s < N$, and constant $0 < \delta < 1$.  Let $\D \in\C^{n\times N}$ ($N = 2n$) be the redundant Haar frame as defined in \eqref{frameHaar} and let $\A \in \C^{n \times n}$ with rows  $\{\r_1,\dots, \r_n\}$ be the orthonormal DFT matrix.   Consider the diagonal matrix $\W = \operatorname{diag}(\w) \in \C^{n \times n}$ with $w_{k} = C' (\log_2(n))^{-1/2} k^{1/2}$, and construct $\widetilde{\A}\in\C^{m\times n}$ by sampling rows from $\A$ i.i.d. from the probability measure $\nu$ on $[n]$ with power-law decay $\nu(k) = C k^{-1}$.

As long as the number of measurements satisfies $m \geq C_1\delta^{-2} s \log^3(s \log(n)) \log^{3}(n),$
then with probability $1-n^{-\log^3 s}$, the following holds for every signal $\f$: the solution $\f^{\sharp}$ of the weighted $\ell_1$-analysis problem \eqref{eq:l1anaW} with $\y=\widetilde{\A} \f+\e$ for noise $\e$ with weighted error $\| \frac{1}{\sqrt{m}} \W \e \|_2 \leq \varepsilon$ satisfies
\begin{equation}
 \|\f^{\sharp}-\f\|_2 \leq C_2 \varepsilon + C_3 \frac{\|\D^*\f - (\D^*\f)_s\|_1}{\sqrt{s}} \nonumber
\end{equation}
Here $C, C_1, C_2$, and $C_3$ are absolute constants independent of the dimensions and the signal $\f$.
\end{corollary}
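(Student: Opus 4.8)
The plan is to obtain Corollary~\ref{fourier:haar} as an instance of Corollary~\ref{thm:variable}, so the work reduces to verifying the latter's hypotheses with the explicit parameters of the Fourier/redundant-Haar pair. Three inputs are needed: a pointwise local-coherence bound $\sup_{j\le N}|\langle\d_j,\r_k\rangle|\le\kappa_k$, a bound on $\|\kappa\|_2^2$, and a bound on the \unrecoverable~$\unrec_{\D,s}$ of the redundant Haar frame. For the first, Corollary~6.4 of~\cite{KW14} bounds the local coherence of the DFT rows $\r_k$ against the Haar basis vectors $h_{\ell,j}$ by $3\sqrt{2\pi}/\sqrt k$; a cyclic shift only multiplies Fourier coefficients by a unimodular phase, so the same bound holds for the shifted vectors $\tilde h_{\ell,j}$, and since each column of $\D$ equals $\tfrac{1}{\sqrt2}h_{\ell,j}$ or $\tfrac{1}{\sqrt2}\tilde h_{\ell,j}$, one may take $\kappa_k=3\sqrt{2\pi}/\sqrt k$. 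The second input is then the elementary estimate $\|\kappa\|_2^2=18\pi\sum_{k=1}^n k^{-1}\le 18\pi\log_2 n$.

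The substantive step is the bound $\unrec_{\D,s}\le\sqrt{2\log_2 n}$. Up to a reordering of columns, $\D$ is $\tfrac{1}{\sqrt2}$ times the concatenation of the Haar basis transform $\H$ of $\C^n$ and its cyclic shift $\tH$ (see~\eqref{frameHaar}); since $\H$ and $\tH$ are both orthonormal, the Gram matrix $\D^*\D$ equals $\tfrac12\Id_n$ on each diagonal block, the only off-diagonal contribution being the cross-Gram block with entries $\tfrac12\langle h_{\ell,k},\tilde h_{\ell',k'}\rangle$. The classical nesting structure of Haar supports shows that $\langle h_{\ell,k},\tilde h_{\ell',k'}\rangle$ vanishes unless the (possibly shifted) dyadic support of one of the two wavelets straddles a breakpoint of the other, which for a fixed $(\ell',k')$ can occur for only $O(1)$ indices $k$ at each of the $\log_2 n$ dyadic scales; hence the cross-Gram block, and therefore every row of $\D^*\D$, has only $O(\log_2 n)$ nonzero entries. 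Consequently, for $s$-sparse $\z$ the vector $\D^*\D\z$ is supported on $O(s\log_2 n)$ coordinates, and since $\D\D^*=\Id_n$ makes $\D^*$ an isometry we have $\enorm{\D^*\D\z}=\enorm{\D\z}$; Cauchy--Schwarz then gives $\norm{\D^*\D\z}_1\lesssim\sqrt{s\log_2 n}\,\enorm{\D\z}$, so that dividing by $\sqrt s$, optimizing over $\enorm{\D\z}=1$, $\norm{\z}_0\le s$, and tracking the constants in the nonzero count carefully yields $\unrec_{\D,s}\le\sqrt{2\log_2 n}$; the same bound holds at sparsity level $2s$ since this estimate is uniform in the sparsity level.

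Finally I would invoke Corollary~\ref{thm:variable} directly with these three facts. Substituting $\kappa_k=3\sqrt{2\pi}/\sqrt k$, its sampling measure $\nu(k)=\kappa_k^2/\|\kappa\|_2^2$ becomes the power law $\nu(k)=Ck^{-1}$ and its preconditioning weights $w_k=\|\kappa\|_2/\kappa_k$ become (up to a $\sqrt{\log_2 n}$ factor) proportional to $k^{1/2}$, matching the $\nu$ and $\W$ of the statement; and its measurement requirement $m\ge C\delta^{-2}\unrec^2\|\kappa\|_2^2\,s\log^3(s\unrec^2)\log N$, after using $\unrec^2\le 2\log_2 n$, $\|\kappa\|_2^2\le 18\pi\log_2 n$, $N=2n$, $\log N\lesssim\log n$, and $\log^3(s\unrec^2)\lesssim\log^3(s\log n)$, collapses to $m\ge C_1\delta^{-2}s\log^3(s\log n)\log^3 n$. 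The failure probability $N^{-\log^3(2s)}=(2n)^{-\log^3(2s)}$ is at most $n^{-\log^3 s}$, and the error bound for the weighted problem~\eqref{eq:l1anaW} is then precisely the conclusion of Corollary~\ref{thm:variable}. The main obstacle is the combinatorial core of the second paragraph: establishing that the cross-Gram matrix of the Haar basis against its cyclic shift is row- and column-sparse with only $O(\log_2 n)$ nonzeros, and tracking the constant to land on $\sqrt{2\log_2 n}$; the remaining steps are parameter bookkeeping.
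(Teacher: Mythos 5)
Your proposal is correct and takes essentially the same route as the paper: reduce to Corollary~\ref{thm:variable} using the local-coherence bound $\kappa_k = 3\sqrt{2\pi}/\sqrt{k}$ from \cite{KW14} together with a bound on the \unrecoverable\ of the redundant Haar frame obtained by showing each column of the Gram matrix $\D^*\D$ has only $O(\log_2 n)$ nonzero entries, then applying Cauchy--Schwarz and the isometry of $\D^*$, and finally bookkeeping the measure, weights, measurement count, and failure probability. The only difference is cosmetic: the paper carries out the per-scale count you flag as the main obstacle explicitly (noting $h_{\ell,k}-\tilde h_{\ell,k}$ has three nonzero entries, so at most $3$ shifted wavelets per level correlate with $h_{\ell,k}$, giving at most $3\log_2 n + 1$ nonzeros per column and $\unrec \leq 1+\sqrt{3\log_2 n}$ rather than exactly $\sqrt{2\log_2 n}$), a constant-level discrepancy that is immaterial to the stated corollary.
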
 

\begin{proof}
To derive this result from Corollary \ref{thm:variable}, it suffices to prove that the redundant Haar frame as defined above has { \unrecoverable~ at most $\unrec \leq  \sqrt{2\log_2(n)}$}.  
We will show that for each $s$-sparse $\z\in\R^N$, $\D^*\D\z$ is at most $3 s \log_2 n$-sparse.  Here, $\D^{*} \D$ is the $N \times N$ Gramian matrix.  For that, consider first either of the two (equal) frame elements $\tfrac{1}{\sqrt{2}}h^0$ and $\tfrac{1}{\sqrt{2}} \tilde{h}^{0}$.  Each of these frame elements has non-zero inner product with exactly two frame elements: $\tfrac{1}{\sqrt{2}}h^0$ and $\tfrac{1}{\sqrt{2}}\tilde{h}^{0}$.  So the corresponding columns of $\D^*\D$ have only two non-vanishing entries.

Consider then a non-constant frame element $\tfrac{1}{\sqrt{2}} h_{\ell,k}$.  Because $\H$ is an orthonormal basis, it is orthogonal to all ${h}_{\ell', k'}$ with $(\ell',k') \neq (\ell,k)$, which is why $(\D^*\D)_{(k,2\ell), (k',2\ell')}=(\D^*\D)_{(k,2\ell-1), (k',2\ell'-1)}=0$ for  $(\ell',k') \neq (\ell,k)$.
So it remains to consider correlations with the $\tilde h_{\ell',k'}$. Again by orthogonality, one has for $(\ell',k')\neq (\ell,k)$
\[
 \langle \tilde{h}_{\ell',k'}, h_{\ell,k}\rangle = \delta_{(\ell',k'), (\ell,k)}+ \langle \tilde{h}_{\ell',k'}, h_{\ell,k}-\tilde{h}_{\ell,k}\rangle,
\]
where $\delta_{(a,b),(c,d)}=\begin{cases}
                             1 & \text{ if }(a,b)=(c,d)\\
                             0 & \text{ else.}
                            \end{cases}
$ denotes the Kronecker delta. By definition, $h_{\ell,k}-\tilde{h}_{\ell,k}$ has only three non-zero entries. On the other hand, for each level $\ell'$, the supports of the $\tilde{h}_{\ell',k'}$ are disjoint for different values of $k'$. So for each $\ell'$, the supports of at most $3$ of the $\tilde{h}_{\ell',k'}$ can intersect the support of $h_{\ell,k}-\tilde{h}_{\ell,k}$. As for $\ell'=\ell$, one of these $\tilde{h}_{\ell',k'}$ must be $\tilde{h}_{\ell,k}$, we conclude that for each $\ell'$ at most $3$ of the $|\langle \tilde{h}_{\ell',k'}, h_{\ell,k}\rangle|$ can be nonzero.
As there are $\log_2 n$ levels $\ell'$, this contributes at most $3\log_2 n$ nonzero entries in the column of $\D^*\D$ indexed by $(k,2\ell)$. Together with $(\D^*\D)_{(k,2\ell), (k,2\ell)}=1$ and noting that a similar analysis holds for columns indexed by $(k,2\ell-1)$, we obtain that each column of $\D^*\D$ has at most $3\log_2(n)+1$ non-zero entries. Now for each $s$-sparse $\z$, $\D^*\D\z$ is a linear combination of the $s$ columns corresponding to $\supp \z$. Consequently, $\|\D^*\D\z-(\D^*\D\z)_s\|_0\leq 3s\log_2(n)$ and thus, by Cauchy-Schwarz and noting that for a Parseval frame $\D$, $\D \D^*$ is the identity,
\begin{align*}
\unrec &= \sup_{\enorm{\D\z}=1, \norm{\z}_{0}\leq s} \frac{\norm{\D^*\D\z}_1 }{\sqrt{s}} =\sup_{\enorm{\D^*\D\z}=1, \norm{\z}_{0}\leq s} \frac{\norm{(\D^*\D\z)_{s}}_1 + \norm{\D^*\D\z - (\D^*\D\z)_{s}}_{1}}{\sqrt{s}}\\ 
&\leq 1 + \sup_{\enorm{\D^*\D\z}=1, \norm{\z}_{0}\leq s} \frac{\sqrt{3s\log_2 n} \norm{\D^*\D\z - (\D^*\D\z)_{s}}_{2}}{\sqrt{s}}\\
&\leq 1 + \sup_{\enorm{\D^*\D\z}=1} \sqrt{3\log_2 n} \norm{\D^*\D\z}_{2} =  1 + \sqrt{3\log_2 n}.
\end{align*}
\end{proof}

\noindent{\bf Remark.}  
 Note that this proof is closely related to the observation that each signal that is $s$-sparse in the Haar frame of redundancy $2$ is also $O(s\log n)$-sparse in the Haar wavelet basis. So a number of measurements comparable to the one derived here also allows for recovery of the wavelet basis coefficients. In addition, however, a more refined analysis suggests that the entries of $\D^*\D\z$ decay quickly - provided not too many approximate cancellations happen. We conjecture that the number of such approximate cancellations can be controlled using a more sophisticated analysis, but we leave this to follow-up work. We hence conjecture that the logarithmic factor can be removed, making the required number of measurements smaller than  for a synthesis approach.

\section{Proof of main result}\label{sec:proof}
Our proof of Theorem \ref{thm:main} extends the analysis in \cite{rauhut2013interpolation}, which extends the analysis in \cite{cheraghchi2013restricted} to weighted sparsity and improves on the analysis in \cite{RefWorks:285}, from orthonormal systems to redundant dictionaries.

\begin{proof}[Proof of Theorem \ref{thm:main}]
Consider the set
$$
\Dsn \defby \{\u : \u\in\C^n, \u = \D\z, \norm{\z}_{\omega,0} \leq s, \enorm{\u} = 1\}.
$$
Let $\Atilde$ be the $m\times n$ subsampled matrix of $\A$ as in the theorem.  Then observe that the smallest value $\delta = \delta_s$ which satisfies the $\D$-$\omega$RIP bound~\eqref{DRIPbound} for $\Atilde$ is precisely
$$
\delta_s = \sup_{\u\in\Dsn} \u^*(\Atilde^*\Atilde-\Id_n)\u.
$$
Since $\Atilde^*\Atilde - \Id_n$ is a self-adjoint operator, we may instead define for any self-adjoint matrix $\B$ the operator
$$
\tripnorm{\B} \defby \sup_{\u\in\Dsn} \ip{\B\u}{\u}.
$$
and equivalently write
$$
\delta_s = \tripnorm{\Atilde^*\Atilde - \Id_n}.
$$
Our goal is thus to bound this quantity.  To that end, let $\rtilde_1, \rtilde_2, \ldots, \rtilde_m \in \C^{n}$ denote the $m$ randomly selected rows of $\A$ that make up $\Atilde$.  It follows from Definition~\ref{BONS} that $\E \rtilde_i^{*} \rtilde_i = \Id_n$ for each $i$, where the expectation is taken with respect to the sampling measure $\nu$.  We thus have 
\begin{equation}\label{delta}
\delta_s = \tripnorm{\Atilde^*\Atilde - \Id_n} = \tripnorm{\frac{1}{m}\sum_{i=1}^m \rtilde_i^{*} \rtilde_i - \Id_n} = \frac{1}{m}\tripnorm{\sum_{i=1}^m \left(\rtilde_i^{*} \rtilde_i - \E\rtilde_i^{*} \rtilde_i\right)}.
\end{equation}
We may bound the moments of this quantity using a symmetrization argument (see, e.g., Lemma 6.7 of~\cite{ra09-1}):
\begin{align}\label{symm}
\E\tripnorm{\sum_{i=1}^m \left(\rtilde_i^{*} \rtilde_i - \E\rtilde_i^{*} \rtilde_i\right)}  &\leq 2\left(\E\tripnorm{\sum_{i=1}^m \epsilon_i\rtilde_i^{*} \rtilde_i} \right) \nonumber \\
&= 2 \E_{\rtilde} \E_{\epsilon} \sup_{\x \in \Dsn} | \langle \sum_{i=1}^m \epsilon_{i}  \rtilde_i^{*} \rtilde_i \x, \x \rangle | \nonumber \\
 &= 2 \E_{\rtilde} \E_{\epsilon} \sup_{\x \in \Dsn} \left| \sum_{i=1}^m \epsilon_{i} | \langle \rtilde_i, \x \rangle |^2 \right|
\end{align}
where $\epsilon_i$ are independent symmetric Bernoulli $(\pm 1)$ random variables.  

Conditional on $(\rtilde_i)$, we have a subGaussian process indexed by $\Dsn$.  For a set $T$, a metric $d$, and a given $t > 0$, the \emph{covering number} $\N(T, d, t)$ is defined as the smallest number of balls of radius $t$ centered at points of $T$ necessary to cover $T$ with respect to $d$.  For fixed $\rtilde_i,$ we work with the (pseudo-)metric
\begin{equation}
\label{pseudo}
d(\x, \z) = \left(  \sum_{i=1}^m ( | \langle \rtilde_i, \x \rangle |^2 - | \langle \rtilde_i, \z \rangle |^2 )^2  \right)^{1/2}
\end{equation}
Then  { Dudley's inequality}~\cite{dudley1967sizes} implies 
 \begin{equation}
 \label{dudley0}
\E_{\epsilon} \sup_{\x \in \Dsn} | \langle \sum_{i=1}^m \epsilon_{i}  \rtilde_i^{*} \rtilde_i \x, \x \rangle | \leq 4 \sqrt{2} \int_{0}^{\infty} \sqrt{\log(\N(\Dsn, d, t))}\,dt.
 \end{equation}
 Continuing as in ~\cite{rauhut2013interpolation}, inspired by the approach of~\cite{cheraghchi2013restricted}, we estimate the metric $d$ using H{\"o}lder's inequality with exponents $p \geq 1$ and $q \geq 1$ satisfying $1/p + 1/q = 1$ to be specified later.  Using also the reverse triangle inequality, we have for $\x, \z \in \Dsn$,
\begin{equation}
\label{continue}
d(\x,\z) \leq 2 \sup_{\u \in \Dsn} \left(   \sum_{i=1}^m | \langle \rtilde_i, \u \rangle |^{2p} \right)^{1/(2p)} \left( \sum_{i=1}^m | \langle \rtilde_i, \x - \z \rangle |^{2q} \right)^{1/(2q)}.
\end{equation}
We will optimize over the values of $p,q$ later on.  To further bound this quantity, we have $ |\ip{\rtilde_i}{\d_j}| \leq \omega_j$ by assumption.   Recall the \unrecoverable~ in Definition \ref{eq:Wunrec}.
Then, fixing   $\u\in\Dsn$ such that $\u=\D\z$ with $\norm{\z}_{\omega,0} \leq s$, we have for any realization of $(\rtilde_i)$,
	\begin{align}\label{old88}
 |\ip{\rtilde_i}{\u}| &=  |\ip{\D\D^*\rtilde_i}{\u}|\notag\\
	&=  |\ip{\D^*\rtilde_i}{\D^*\D\z}|\notag\\
	&\leq  \sum_j |\ip{\rtilde_i}{\d_j}|_j |(\D^{*} \D \z)_j |   \notag \\
		&\leq   \sum_j \omega_j |(\D^{*} \D \z)_j | \notag \\
		&= \norm{\D^*\D\z}_{\omega,1} \notag \\
	&\leq  \sqrt{s} \unrec.
	\end{align}
	The first line uses that $\D$ is a Parseval frame, while the last line uses the definition of $\unrec$.  The 
	quantity $s \unrec^2$ will serve as a rescaled sparsity parameter throughout the remaining proof.
		
Continuing from \eqref{continue}, we may bound
\begin{align}
\label{bound_gen}
	\sup_{\u \in \Dsn} \left( \sum_{i=1}^m | \langle \rtilde_i, \u \rangle |^{2p} \right)^{1/(2p)} &= 
	\sup_{\u \in \Dsn} \left( \sum_{i=1}^m | \langle \rtilde_i, \u \rangle |^{2}| \langle \rtilde_i, \u \rangle |^{2p-2} \right)^{1/(2p)} \notag \\
	&\leq (s \unrec^2)^{(p-1)/(2p)} \left( \sup_{\u \in \Dsn} \sum_{i=1}^m | \langle \rtilde_i, \u \rangle |^2  \right)^{1/(2p)}.
\end{align}
We now introduce the (semi-)norm
$$
\| \u \|_{X,q} := \left( \sum_{i=1}^m | \langle \rtilde_i, \u \rangle |^{2q} \right)^{1/(2q)}.
$$
Using basic properties of covering numbers and the bound in \eqref{dudley0}, we obtain
\begin{align}
\label{dudley2}
\mathbb{E}_{\epsilon}  \sup_{\u \in \Dsn} | \langle \sum_{i=1}^m \epsilon_{i}  \rtilde_i^{*} \rtilde_i \u, \u \rangle | &\leq C_1 (s \unrec^2)^{(p-1)/(2p)}  \left( \sup_{\u \in \Dsn} \sum_{i=1}^m | \langle \rtilde_i, \u \rangle |^2  \right)^{1/(2p)} \int_{0}^{\infty}  \sqrt{\log(\N(\Dsn, \| \cdot \|_{X,q}, t))}\,dt
\end{align}
where $C_1$ is an absolute constant.

We now estimate the key integral above in two different ways, the first bound being better for small values of $t$, the second bound better for larger values of $t$.

\begin{description}
\item[Small values of $t$: ] Following \eqref{old88} and \eqref{bound_gen} and since $\|\u\|_2=1$ for all $\u\in\Dsn$, we have that for any $\u \in \Dsn$,
\begin{equation}
\label{simple_bound}
\| \u \|_{X,q} \leq  (s \unrec^2)^{1/2} m^{1/(2q)}.
\end{equation}
Standard covering arguments show that for any seminorm $\norm{\cdot}$, one has for the unit ball $B_{\norm{\cdot}}\subset \C^s$ that $\N(B_{\norm{\cdot}}, \norm{\cdot}, t) \leq (1+1/t)^{2s}$.  Thus for $\Lambda\subset\{1, \dots, N\}$ with $|\Lambda|=s'$, one can define the seminorm $\norm{\cdot}_{\Lambda}$ on $\C^{s'}$ by 
	$$\norm{\y}_{\Lambda} \defby \left( \sum_{i=1}^m | \ip{\D_\Lambda^*\rtilde_i^{*}}{\y} |^{2q} \right)^{1/(2q)},$$
	 and observe that for $\u$ with $\| \u \|_{X,q} \leq 1$ and $\u=\D_\Lambda\y$ for $\y\in\C^{s'}$ that one has the equivalence $ \norm{\y}_{\Lambda} = \norm{\u}_{X,q}$. Applying this equivalence for $\Lambda=\supp(\z)$ and $\y=\z_\Lambda$, we have that
	\begin{align*}
	\N(\Dsn, \| \cdot \|_{X,q}, t) &\leq \sum_{\omega(\Lambda)\leq s} \N(B_{\|\cdot\|_\Lambda}, m^{1/(2q)} \| \cdot \|_2 \sqrt{(s \unrec^2)},t )  \\
	&\leq {\binom{N}{s}}(1 + \sqrt{(s \unrec^2)} m^{1/(2q)} t^{-1})^{2s}\\
	&\leq (eN/s)^{s}(1 + \sqrt{(s \unrec^2)} m^{1/(2q)}t^{-1})^{2s},
	\end{align*}
	where we have applied \cite[Proposition C.3]{RefWorks:45}   in the final line.

	\item[Large values of $t$: ] To get a bound for larger values of $t$, we will utilize another embedding, and for that reason we define the auxiliary set
	$$
	\DsnOne \defby \{\u : \norm{\D^*\u}_{\omega,1} \leq 1, \u=\D\z \in \C^n, \| \z \|_{\omega,0} \leq s\} = \bigcup_{\Lambda: \omega(\Lambda) \leq s}\BoneS,
	$$
	where $\BoneS \defby \{\u : \norm{\D^*\u}_{\omega,1} \leq 1, \u=\D\z \in \C^n, \supp(\z) = \Lambda\}$.
	Then we have the embedding $\Dsn \subset \sqrt{(s \unrec^2)}\DsnOne$ since for any $\u\in\Dsn$, 
 $\norm{\D^*\u}_{\omega,1} \leq \sqrt{(s \unrec^2)}$ by definition of the localization factor $\unrec$.

	We now use a variant of Maurey's lemma \cite{maurey}, precisely, the variant as stated in Lemma 5.3 in \cite{rauhut2013interpolation}, in order to deduce a different covering number bound:
	\begin{lemma}[Maurey's lemma]
	\label{maurey}
	For a normed space $X$, consider a finite set $U \subset X$ of cardinality $N$, and assume that for every $L \in \mathbb{N}$ and $(\u_1, \dots, \u_L) \in U^L$, $\mathbb{E}_{\epsilon} \| \sum_{j=1}^L \epsilon_j \u_j \|_X \leq A \sqrt{L}$, where $\epsilon$ denotes a Rademacher vector.  Then for every $t > 0$,
	$$
	\log{\N\left(\emph{conv}(U), \| \cdot \|_X, t \right)} \leq c (A/t)^2 \log(N),
	$$
	where $\text{conv}(U)$ is the convex hull of $U$.
	\end{lemma}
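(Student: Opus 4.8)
The plan is to use the \emph{empirical method} of Maurey: I will show that every point of $\mathrm{conv}(U)$ is approximated in $\|\cdot\|_X$ by an average of only a few elements of $U$, and that the number of such short averages is controlled by $N$. Concretely, I would fix an arbitrary $\x\in\mathrm{conv}(U)$, write $\x=\sum_{k=1}^N\lambda_k\u_k$ with $\lambda_k\ge 0$ and $\sum_k\lambda_k=1$, and view $(\lambda_k)_{k=1}^N$ as a probability distribution on $U$. Drawing $Z_1,\dots,Z_L$ i.i.d.\ from this distribution gives an unbiased estimator $\bar Z=\tfrac1L\sum_{j=1}^L Z_j$ of $\x$, and the core estimate is a bound on $\E\|\bar Z-\x\|_X$. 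By the standard symmetrization inequality for sums of independent centered random vectors,
$$
\E_Z\Bigl\|\tfrac1L\sum_{j=1}^L(Z_j-\E Z_j)\Bigr\|_X \le \tfrac{2}{L}\,\E_Z\E_\epsilon\Bigl\|\sum_{j=1}^L\epsilon_j Z_j\Bigr\|_X,
$$
and, conditioning on the (random, but $U$-valued) tuple $(Z_1,\dots,Z_L)\in U^L$, the hypothesis of the lemma bounds the inner Rademacher average by $A\sqrt L$ for every realization. Hence $\E\|\bar Z-\x\|_X\le 2A/\sqrt L$.

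Next I would run a probabilistic pigeonhole argument: since this expectation is at most $2A/\sqrt L$, at least one deterministic tuple $(\u_{i_1},\dots,\u_{i_L})\in U^L$ satisfies $\bigl\|\tfrac1L\sum_{j=1}^L\u_{i_j}-\x\bigr\|_X\le 2A/\sqrt L$. As $\x\in\mathrm{conv}(U)$ was arbitrary, the set of all $L$-fold averages of elements of $U$ — a finite set of cardinality at most $N^L$ — is a $(2A/\sqrt L)$-net of $\mathrm{conv}(U)$, that is $\N(\mathrm{conv}(U),\|\cdot\|_X,2A/\sqrt L)\le N^L$. To conclude for a given $t>0$: if $t\ge 2A$ the asserted bound is vacuous, since applying the hypothesis with $L=1$ gives $\|\u_k\|_X\le A$, so $\mathrm{conv}(U)$ lies in a single ball of radius $t$; otherwise I choose $L=\lceil 4A^2/t^2\rceil$, which makes the net radius at most $t$ and, using $t<2A$, satisfies $L\le c(A/t)^2$. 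Taking logarithms yields $\log\N(\mathrm{conv}(U),\|\cdot\|_X,t)\le L\log N\le c(A/t)^2\log N$.

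I do not expect a genuine obstacle here — this is the classical argument, and in spirit everything is finite-dimensional, so measurability plays no role. The only point deserving care is the bookkeeping in the symmetrization step: the hypothesis of the lemma is stated only for \emph{deterministic} tuples in $U^L$, so one must condition on the random tuple $(Z_1,\dots,Z_L)$ before invoking it, and the factor $2$ in the displayed inequality is simply the usual symmetrization constant for sums of independent centered random vectors.
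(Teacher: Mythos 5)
Your argument is correct: the symmetrization step is applied properly (conditioning on the realized tuple in $U^L$ before invoking the hypothesis), the pigeonhole selection of a deterministic $L$-fold average gives a net of size at most $N^L$ inside $\mathrm{conv}(U)$, and the case split at $t\ge 2A$ together with the choice $L=\lceil 4A^2/t^2\rceil$ yields the claimed bound. The paper itself does not prove this lemma but imports it (citing Maurey and Lemma 5.3 of the Rauhut--Ward reference), and your proof is exactly the classical empirical-method argument used there, so there is nothing to flag.
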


As a corollary, we have the following.
	\begin{corollary}\label{lem:covering}
For every $t > 0$,
	$$
	\log(\N(\Dsn, \| \cdot \|_{X,q}, t) \leq C \left( 2 e^{-1/2} \sqrt{2q} m^{1/(2q)} \sqrt{2(s \unrec^2)} t^{-1} \right)^2 \log(N).
	$$
	\end{corollary}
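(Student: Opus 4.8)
The plan is to deduce Corollary~\ref{lem:covering} from Maurey's lemma (Lemma~\ref{maurey}) by realizing $\Dsn$ as a dilate of the convex hull of an explicit set of $O(N)$ points and then checking the Rademacher‑average hypothesis of that lemma for the seminorm $\|\cdot\|_{X,q}$.

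First I would record the two embeddings that turn $\Dsn$ into a convex hull. Since $\unrec$ controls $\|\D^*\D\z\|_{\omega,1}$ for weighted $s$‑sparse $\z$, every $\u \in \Dsn$ satisfies $\|\D^*\u\|_{\omega,1} \leq \sqrt{s\unrec^2}$, so $\Dsn \subset \sqrt{s\unrec^2}\,\DsnOne$ as already noted in the proof. Next, because $\D$ is a Parseval frame, $\u = \D\D^*\u = \sum_{j}(\D^*\u)_j\,\d_j = \sum_j \bigl(\omega_j(\D^*\u)_j\bigr)\tfrac{\d_j}{\omega_j}$, and for $\u\in\DsnOne$ the coefficients $c_j := \omega_j(\D^*\u)_j$ satisfy $\sum_j|c_j| = \|\D^*\u\|_{\omega,1}\leq 1$. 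Splitting each $c_j$ into real and imaginary parts (which costs a factor $\sqrt2$ in the $\ell^1$ mass of the coefficients) exhibits $\DsnOne \subset \sqrt2\,\mathrm{conv}(U)$, where $U = \{\,\zeta\,\d_j/\omega_j : j\in[N],\ \zeta\in\{\pm1,\pm i\}\,\}\cup\{0\}$ has cardinality $4N+1$. Combining, $\Dsn \subset \sqrt{2(s\unrec^2)}\,\mathrm{conv}(U)$.

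The main work — and the step I expect to be the bottleneck — is verifying the hypothesis of Maurey's lemma in the seminorm $\|\cdot\|_{X,q}$: I need a constant $A$ with $\E_\epsilon\bigl\|\sum_{l=1}^L \epsilon_l \u_l\bigr\|_{X,q} \leq A\sqrt L$ for every $L$ and every $(\u_1,\dots,\u_L)\in U^L$. Writing a generic element as $\u_l = \zeta_l\,\d_{j_l}/\omega_{j_l}$, one has $\langle\rtilde_i,\sum_l\epsilon_l\u_l\rangle = \sum_l \epsilon_l a_{l,i}$ with $a_{l,i}=\zeta_l\langle\rtilde_i,\d_{j_l}\rangle/\omega_{j_l}$, and the crucial observation is that $|a_{l,i}|\leq 1$ by the normalization $|\langle\rtilde_i,\d_j\rangle|\leq\omega_j$ from \eqref{eq:wts} (the same mechanism used in \eqref{old88}), hence $\sum_{l=1}^L|a_{l,i}|^2\leq L$. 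Applying Khintchine's inequality with exponent $2q$ coordinatewise in $i$, summing over $i\in[m]$ to bound $\E_\epsilon\|\sum_l\epsilon_l\u_l\|_{X,q}^{2q}$, and then using $\E_\epsilon\|\cdot\|_{X,q}\leq(\E_\epsilon\|\cdot\|_{X,q}^{2q})^{1/(2q)}$ yields $\E_\epsilon\|\sum_l\epsilon_l\u_l\|_{X,q}\leq 2e^{-1/2}\sqrt{2q}\,m^{1/(2q)}\sqrt L$, i.e. one may take $A = 2e^{-1/2}\sqrt{2q}\,m^{1/(2q)}$ (this is where the explicit Khintchine constant from Lemma~5.3 of \cite{rauhut2013interpolation} enters).

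Finally I would assemble the pieces: Maurey's lemma gives $\log\N(\mathrm{conv}(U),\|\cdot\|_{X,q},t)\leq c\,(A/t)^2\log(4N+1)$, and rescaling via $\N(\Dsn,\|\cdot\|_{X,q},t)\leq\N(\mathrm{conv}(U),\|\cdot\|_{X,q},t/\sqrt{2(s\unrec^2)})$ together with $A \leq 2e^{-1/2}\sqrt{2q}\,m^{1/(2q)}$ and $\log(4N+1)\leq C'\log N$ produces exactly the claimed bound $\log\N(\Dsn,\|\cdot\|_{X,q},t) \leq C\bigl(2 e^{-1/2} \sqrt{2q}\, m^{1/(2q)} \sqrt{2(s\unrec^2)}\, t^{-1}\bigr)^2 \log(N)$. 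The only genuinely delicate points are getting the per‑coordinate bound $\sum_l|a_{l,i}|^2\leq L$ right — this is precisely where the weight normalization \eqref{eq:wts} is used — and bookkeeping the Khintchine/Maurey constants so that the stated numerical factor comes out; the convex‑hull and rescaling steps are routine.
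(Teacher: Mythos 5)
Your proposal is correct and follows essentially the same route as the paper: the same embedding $\Dsn \subset \sqrt{2(s\unrec^2)}\,\mathrm{conv}(U)$ with $U$ consisting of the points $\pm\omega_j^{-1}\d_j,\ \pm i\,\omega_j^{-1}\d_j$, the same Khintchine estimate yielding $A = 2e^{-1/2}\sqrt{2q}\,m^{1/(2q)}$ via the bound $|\langle\rtilde_i,\d_j\rangle|\leq\omega_j$, and the same invocation of Maurey's lemma followed by rescaling. You merely spell out two steps the paper leaves implicit (the factor $\sqrt{2}$ from splitting real and imaginary parts, and the dilation of the covering radius), which is fine.
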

	\begin{proof}[Proof of Corollary \ref{lem:covering}]
	Consider Lemma \ref{maurey} with norm $\| \cdot \|_{X,q}$, parameter $A = 2 e^{-1/2} \sqrt{2q} m^{1/(2q)}$, and 
	$U = \{ \pm \omega_j^{-1} \D^{*}(\e_j), \pm i \omega_j^{-1} \D^*(\e_j), j \in [N] \}$.  
First observe that 
$$\Dsn \subset  \sqrt{(s \unrec^2)}\DsnOne \subset \sqrt{2(s \unrec^2)} \text{conv}(U).$$
For a Bernoulli random vector $\epsilon = (\epsilon_1, \dots, \epsilon_L)$ and $\u_1, \dots, \u_L \in U$ we have
\begin{align}
\mathbb{E}_{\epsilon} \| \sum_{j=1}^L \epsilon_j \u_j \|_{X,q} &\leq \left( \mathbb{E} \| \sum_{j=1}^L \epsilon_j \u_j \|_{X,q}^{2q} \right)^{1/(2q)} \notag \\
&= \left( \mathbb{E} \sum_{\ell=1}^m | \langle{\rtilde_{\ell}, \sum_{j=1}^L \epsilon_j \u_j \rangle} |^{2q}  \right)^{1/(2q)}
\notag \\
&=  \left(  \sum_{\ell=1}^m \mathbb{E} | \langle{\rtilde_{\ell}, \sum_{j=1}^L \epsilon_j \u_j \rangle} |^{2q}  \right)^{1/(2q)}
\notag \\
&\leq 2 e^{-1/2} \sqrt{2q} \left(  \sum_{\ell=1}^m \| ( \langle{\rtilde_{\ell},\u_j \rangle} )_{j=1}^L \|_2^{2q} \right)^{1/(2q)},
\end{align}
where we applied Khintchine's inequality \cite{khintchine1923dyadische} in the last step.  
Since each $\u_j$ consists of a multiple of a single column of $\D$, we also have for each $j$ and $i$ that $|\ip{\rtilde_i}{\u_j}| = |\ip{\D^*\rtilde_i^{*}}{\omega_j^{-1} \e_{j}}|$ for some coordinate vector $\e_{j}$ consisting of all zeros with a $1$ in the $j$th position.  Thus $|\ip{\rtilde_i}{\u_j}| \leq 1$ for each $j$ and $i$, which means that 
$\enorm{(\ip{\rtilde_i}{\u_j})_{k=1}^L} \leq \sqrt{L}$ for any $L$ and 
$$
\mathbb{E}_{\epsilon} \| \sum_{j=1}^L \epsilon_j \u_j \|_{X,q} \leq 2e^{-1/2}\sqrt{2q}m^{1/(2q)} \sqrt{L} = A \sqrt{L}.
$$
The corollary then follows from Lemma \ref{maurey}.
\end{proof}

\end{description}
	
We have thus obtained the two bounds for $t > 0$:
	\begin{align*}
	\sqrt{\log(\N(\Dsn, \Xnorm{\cdot}, t))} &\leq  \sqrt{ s \log(eN/s) + 2s \log\left( 1 + 2\sqrt{(s \unrec^2)}m^{1/(2q)}t^{-1} \right)} \nonumber \\
	\sqrt{\log(\N(\Dsn, \Xnorm{\cdot}, t))} &\leq C \sqrt{q} m^{1/(2q)} \sqrt{(s \unrec^2)} t^{-1} \sqrt{\log(N)}.
	\end{align*}	
	We may now bound the integral in~\eqref{dudley0}.   Without loss, we take the upper integration bound as $t_0 = \sqrt{(s \unrec^2)} m^{1/(2q)}$ because, for $t >t_0$, we have $\N(\Dsn, \| \cdot \|_{X,q}, t) = 1$ by \eqref{simple_bound}.  Splitting the integral in two parts and using our covering number bounds, we have for $\alpha \in (0, t_0)$,
	\begin{align*}
	\int_{0}^{\infty} &\sqrt{\log(\N(\Dsn, \| \cdot \|_{X,q}, t))}\,dt\\
	&\leq \int_{0}^{\alpha}  \sqrt{ s \log(eN/s) + 2s \log\left( 1 + 2\sqrt{(s \unrec^2)}m^{1/(2q)}t^{-1} \right)} dt \hspace{.5mm} + C_1 \sqrt{q} m^{1/(2q)} \sqrt{(s \unrec^2)}  \int_{\alpha}^{t_0} t^{-1} dt  \nonumber \\
	&\leq \alpha \sqrt{s \log(eN/s)} + \sqrt{2s}\alpha \sqrt{ \log(e(1 + \sqrt{(s \unrec^2)} m^{1/(2q)})\alpha^{-1}} + C_2 \sqrt{q m^{1/q} (s \unrec^2) \ln (4N)} \log(\sqrt{(s \unrec^2)} m^{1/(2q)}/\alpha)
	\end{align*}
	where in the last step we have utilized that for $a>0$, $\int_0^a \sqrt{\ln(1+t^{-1})}\,dt \leq a\sqrt{\ln(e(1+a^{-1}))}$ (see Lemma 10.3 of~\cite{ra09-1}).
	{ Choosing $\alpha = m^{1/(2q)}$ } yields 
	$$
	\int_{0}^{\infty} \sqrt{\log(\N(\Dsn, \| \cdot \|_{X,q}, t))}\,dt \leq C_3 \sqrt{q (s \unrec^2) m^{1/q} \log(N) \log^2(s \unrec^2)}.
	$$	
	Combining this with~\eqref{delta},~\eqref{symm} and \eqref{dudley2},we have
	\begin{align*}
	\E \delta_s &\leq \frac{C_3 (s \unrec^2)^{(p-1)/(2p)} \sqrt{q m^{1/q} (s \unrec^2)  \log(N) \log^2(s \unrec^2)}}{m} \E \sup_{\x \in \Dsn} \left( \sum_{i=1}^m | \langle \rtilde_i, \x \rangle |^2 \right)^{1/(2p)} \nonumber \\
	&\leq \frac{C_3 (s \unrec^2)^{1/2 + (p-1)/(2p)} \sqrt{q \log(N) \log^2(s \unrec^2)}}{m^{1 - 1/(2q)}m^{-1/(2p)}} 
	\E \left( \frac{1}{m} \tripnorm{ \sum_{i=1}^m \rtilde_i^{*} \rtilde_i - \Id_n} + \tripnorm{ \Id_n } \right)^{1/(2p)} \nonumber \\
	&\leq  \frac{C_3 (s \unrec^2)^{1/2 + (p-1)/(2p)} \sqrt{q \log(N) \log^2(s \unrec^2)}}{m^{1/2}}  \sqrt{ \E \delta_s + 1}.
	\end{align*}
Above we applied H{\"o}lder's inequality and used that $1/q + 1/p = 1$ as well as $p \geq 1$.  { We now choose $p = 1 + 1/\log(s \unrec^2)$ and $q = 1 + \log(s \unrec^2)$ to give} $(s \unrec^2)^{(p-1)/(2p)} \leq (s \unrec^2)^{(p-1)/2} = (s \unrec^2)^{1/(2\log(s \unrec^2))} = e^{1/2}$ and
$$
\E \delta_s \leq C_4 \sqrt{2 \log(N) \log^2(s \unrec^2)/m} \sqrt{ \E \delta_s + 1}.
$$		
	Squaring this inequality and completing the square finally shows that
	\begin{equation}\label{prob}
	\E \delta_s \leq C_5 \sqrt{ \frac{(s \unrec^2) \log(N) \log^3(s \unrec^2)}{m}},
	\end{equation}
provided the term under the square root is at most 1.  Then $\E \delta_s \leq \delta/2$ for some $\delta \in (0,1)$ if
\begin{equation}
\label{end_m}
m \geq C_6 \delta^{-2} (s \unrec^2) \log^3(s \unrec^2) \log(N).
\end{equation}
\medskip

It remains to show that $\delta_s$ does not deviate much from its expectation.  For this \emph{probability bound}, we may write $\delta_s$ as the supremum of an empirical process as in [\cite{ra09-1}, Theorem 6.25] and apply the following Bernstein inequality for the supremum of an empirical process:

\begin{theorem}[\cite{B03}, Theorem 6.25 of \cite{ra09-1}]
Let ${\mathcal{F}}$ be a countable set of functions $f : \C^n \rightarrow \R$.  Let $Y_1, \dots, Y_m$ be independent copies of a random vector $Y$ on $\C^n$ such that $\E f(Y) = 0$ for all $f \in {\mathcal{F}}$, and assume $f(Y) \leq 1$ almost surely.  Let $Z$ be the random variable $Z = \sup_{f \in {\mathcal{F}}} \sum_{\ell=1}^m f(Y_{\ell})$, and $\E Z$ its expectation.  Let $\sigma^2 > 0$ such that $\mathbb{E} [ f(Y)^2 ] \leq \sigma^2$ for all $f \in {\mathcal{F}}$.  Then, for all $t > 0$,
\begin{equation}
\mathbb{P}(Z \geq \E Z + t) \leq \exp\left( - \frac{t^2}{2(m \sigma^2 + 2 \E Z) + 2t/3} \right)
\end{equation}
\end{theorem}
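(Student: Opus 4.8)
The final statement is Bousquet's version of Talagrand's concentration inequality for the supremum of an empirical process, so the plan is to prove it by the entropy (Herbst) method, following Bousquet and Boucheron--Lugosi--Massart. First I would reduce the tail bound to a bound on the moment generating function: by Markov's inequality, for every $\lambda > 0$,
\[
\P(Z \ge \E Z + t) \le e^{-\lambda t}\,\E\, e^{\lambda(Z - \E Z)},
\]
so it suffices to control the centered cumulant generating function $H(\lambda) := \log \E\, e^{\lambda(Z - \E Z)}$ for $\lambda > 0$. Since $\mathcal{F}$ is only countable, I would first prove the estimate for an arbitrary finite subclass with constants independent of its cardinality and recover the general statement by monotone convergence; this also lets me fix a maximizing $\hat f$ for $Z$.

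The core of the argument is the entropy method. For each $i$ I introduce the leave-one-out supremum $Z_i := \sup_{f \in \mathcal{F}} \sum_{\ell \ne i} f(Y_\ell)$, which depends only on $(Y_\ell)_{\ell \ne i}$. Testing the definition of $Z_i$ against the maximizer $\hat f$ of $Z$ gives the one-sided bounded-difference estimate $Z - Z_i \le \hat f(Y_i) \le 1$, and summing over $i$ yields the self-bounding-type inequality $\sum_{i=1}^m (Z - Z_i) \le Z$. By the sub-additivity (tensorization) of entropy together with a modified logarithmic Sobolev inequality of Ledoux, I would obtain
\[
\mathrm{Ent}(e^{\lambda Z}) \le \sum_{i=1}^m \E\!\left[ e^{\lambda Z}\,\phi\!\left(-\lambda(Z - Z_i)\right)\right], \qquad \phi(x) = e^x - x - 1,
\]
where $\mathrm{Ent}(W) = \E[W\log W] - \E[W]\log \E[W]$. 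Writing the left-hand side in terms of $H$ via $\mathrm{Ent}(e^{\lambda Z}) = e^{H(\lambda) + \lambda \E Z}\bigl(\lambda H'(\lambda) - H(\lambda)\bigr)$ turns this into a first-order differential inequality for $H$.

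To close it I would bound the right-hand side using $Z - Z_i \le 1$, on which range $\phi(-\lambda(Z-Z_i))$ is dominated by a multiple of $(Z-Z_i)_+^2$ plus the linear part of $\phi$, and then use the variance-type estimate $\sum_i \E_i[(Z - Z_i)^2] \le m\sigma^2 + 2\,\E Z$, which combines the uniform variance hypothesis $\E f(Y)^2 \le \sigma^2$ with the self-bounding inequality above. Feeding these in and integrating the resulting inequality (with the boundary behaviour $H(\lambda) = O(\lambda^2)$ as $\lambda \to 0^+$) produces the Bennett-type moment bound
\[
H(\lambda) \le v\,(e^\lambda - \lambda - 1), \qquad v := m\sigma^2 + 2\,\E Z .
\]
This step is the main obstacle: because the functions $f$ are bounded only from above, the differences $Z - Z_i$ are controlled on one side only, so the estimates must be arranged so that only the favourable tail of $\phi$ is used and so that exactly the proxy $v = m\sigma^2 + 2\E Z$ emerges with sharp constants. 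This one-sided bookkeeping, rather than the tensorization itself, is where Bousquet's sharpening over earlier Talagrand-type bounds lives.

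Finally I would optimize the Chernoff bound. Minimizing $v(e^\lambda - \lambda - 1) - \lambda t$ over $\lambda > 0$, attained at $\lambda = \log(1 + t/v)$, gives the Bennett form $\P(Z \ge \E Z + t) \le \exp\!\left(-v\,h(t/v)\right)$ with $h(u) = (1+u)\log(1+u) - u$. The elementary inequality $h(u) \ge \frac{u^2}{2(1 + u/3)}$ then converts this into
\[
\P(Z \ge \E Z + t) \le \exp\!\left(-\frac{t^2}{2(m\sigma^2 + 2\,\E Z) + 2t/3}\right),
\]
which is exactly the claimed bound.
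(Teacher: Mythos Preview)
The paper does not prove this theorem; it is quoted as a black-box result from Bousquet \cite{B03} (as recorded in Theorem~6.25 of \cite{ra09-1}) and then applied to obtain the deviation bound on $\delta_s$. There is therefore no ``paper's own proof'' to compare against.

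That said, your outline follows the standard route by which Bousquet established the result: the entropy/Herbst method with leave-one-out suprema $Z_i$, the one-sided increment bound $Z-Z_i\le \hat f(Y_i)\le 1$, tensorization of entropy via a modified log-Sobolev inequality, integration of the resulting differential inequality for $H(\lambda)$ to a Bennett-type MGF bound with proxy variance $v=m\sigma^2+2\,\E Z$, and finally the conversion $h(u)\ge u^2/(2(1+u/3))$ to Bernstein form. The one place where your sketch is a bit loose is the step ``$\sum_i \E_i[(Z-Z_i)^2]\le m\sigma^2+2\,\E Z$'': since the maximizer $\hat f$ depends on all the $Y_\ell$, one cannot simply invoke $\E[\hat f(Y_i)^2]\le\sigma^2$ coordinatewise. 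In Bousquet's argument the variance proxy $v$ emerges instead from bounding $\sum_i \phi(-\lambda(Z-Z_i))$ by combining $Z-Z_i\le \hat f(Y_i)\le 1$ with an auxiliary estimate that uses both $\E f(Y)^2\le\sigma^2$ and the self-bounding relation $\sum_i(Z-Z_i)\le Z$ inside the exponential; getting the sharp constants $2$ in front of $\E Z$ and $2/3$ in front of $t$ requires this more careful accounting. If you intend to write out a full proof rather than a sketch, that is the step to be precise about.
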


We apply this theorem to provide a probability bound.  Let $f_{z,w}(\rtilde) = \text{Re}(\langle (\rtilde_i^{*} \rtilde_i - \Id)z, w \rangle )$ so that
$$
m \delta_s = \tripnorm{\sum_{i=1}^m \left(\rtilde_i^{*} \rtilde_i - \E\rtilde_i^{*} \rtilde_i\right)}  =  \sup_{(z,w) \in {\mathcal{D}}, } \sum_{i=1}^m f_{z,w}(\rtilde_{i}).
$$
Since $\E \rtilde_i^{*} \rtilde_i = \Id$ we have $\E f_{z,w}(\rtilde) = 0$.  Moreover, 
$| f_{z,w}(\rtilde) | \leq \max_{z \in {\mathcal{D}}} |\langle \rtilde_i, z \rangle|^2 + 1 \leq   s \eta^2 + 1$.  For the variance term, 
we have $\E | f_{z,w}(\rtilde_i)|^2 = \E \| (\rtilde_i^{*} \rtilde_i  - \Id) z \|_2^2  \leq (s \eta^2 + 1)^2.$ 

\bigskip

Fix $\delta \in (0,1)$, and suppose the number of measurements $m$ satisfies \eqref{end_m} so that $\E \delta_s \leq \delta/2$.  Then it follows
	\begin{eqnarray}\label{tail}
\P(\delta_s \geq \delta) &\leq& \P(\delta_s \geq \E \delta_s + \delta/9) \nonumber \\
&=&  \P \left(\tripnorm{\sum_{i=1}^m \left(\rtilde_i^{*} \rtilde_i - \E\rtilde_i^{*} \rtilde_i\right)}  \geq \E \tripnorm{\sum_{i=1}^m \left(\rtilde_i^{*} \rtilde_i - \E\rtilde_i^{*} \rtilde_i\right)} + \delta m/9 \right)  \nonumber \\
&=&   \P \left(\frac{1}{s \eta^2 + 1} \tripnorm{\sum_{i=1}^m \left(\rtilde_i^{*} \rtilde_i - \E\rtilde_i^{*} \rtilde_i\right)}  \geq \frac{1}{s \eta^2 + 1} \E \tripnorm{\sum_{i=1}^m \left(\rtilde_i^{*} \rtilde_i - \E\rtilde_i^{*} \rtilde_i\right)} + \frac{\delta m/9}{s \eta^2 + 1}  \right)  \nonumber \\
&\leq&  \exp \left(- \frac{(\frac{\delta m/9}{s \eta^2 + 1})^2}{2m(1 + \frac{\delta}{s \eta^2+1})  +  \frac{2}{3}(\frac{\delta m/9}{s \eta^2 + 1})} \right)     \nonumber \\
&\leq& \exp{\left( -\frac{\delta^2 m}{C_7 (s \unrec^2)}\right)}.
	\end{eqnarray}
The last term is bounded by $\gamma \in (0,1)$ if $m \geq C \delta^{-2} (s \unrec^2) \log(1/\gamma)$.  Together, we have $\delta_s \leq \delta$ with probability at least $1-\gamma$ if
$$
m \geq C_8 \delta^{-2} s \unrec^2 \max\left\{ \log^3(s\unrec^2) \log(N), \log(1/\gamma) \right\}.
$$
This completes the proof.
	\end{proof}

\section{Conclusion}\label{sec:disc}
We have introduced a coherence-based analysis of compressive sensing when the signal to be recovered is approximately sparse in a redundant dictionary.  Whereas previous theory only allowed for unstructured random sensing measurements, our coherence-based analysis extends to structured sensing measurements such as subsampled uniformly bounded bases, bringing the theory closer to the setting of practical applications.  We also extend the theory of variable density sampling to the dictionary setting, permitting some coherence between sensing measurements and sparsity dictionary.  We further extend the analysis to allow for weighted sparse expansions.  Still, several open questions remain.  While we provided two concrete examples of dictionaries satisfying the bounded \unrecoverable~ condition required by our analysis -- the oversampled DFT frame and redundant Haar wavelet frame -- these bounds can almost certainly be extended to more general classes of dictionaries, and improved considerably in the case of the oversampled DFT frame.  We have also left several open problems related to the full analysis for variable density sampling in this setting, including the removal of a weighted noise assumption in the $\ell_1$-analysis reconstruction method.  Finally, we believe that the $\D$-RIP assumption used throughout our analysis can be relaxed, and that a RIPless analysis \cite{RefWorks:595} should be possible and permit non-uniform signal recovery bounds at a reduced number of measurements.  It would also be useful to extend our results to measurement matrices constructed in a deterministic fashion for those applications in which randomness is not admissable; of course this is a challenge even in the classical setting \cite{RefWorks:36}.

\section*{Acknowledgments}
The authors thank the reviewers of this manuscript for their helpful suggestions which significantly improved the paper.
The authors would also like to thank the Institute for Computational and Experimental Research in Mathematics (ICERM) for its hospitality during a stay where this work was initiated.  In addition, Krahmer was supported by the German Science Foundation (DFG) in the context of the Emmy Noether Junior Research Group KR 4512/1-1 (RaSenQuaSI). Needell was partially supported by NSF CAREER $\#1348721$ and the Alfred P. Sloan Foundation. Ward was partially supported by an NSF CAREER award, DOD-Navy grant N00014-12-1-0743, and an AFOSR Young Investigator Program award.



\bibliography{bib}

\newcommand{\etalchar}[1]{$^{#1}$}
\begin{thebibliography}{BDDW08}

\bibitem[AC09]{RefWorks:326}
N.~Ailon and B.~Chazelle.
\newblock The fast {J}ohnson-{L}indenstrauss transform and approximate nearest
  neighbors.
\newblock {\em SIAM J. Comput.}, 39:302--322, 2009.

\bibitem[Ach03]{RefWorks:90}
D.~Achlioptas.
\newblock Database-friendly random projections: Johnson--lindenstrauss with
  binary coins.
\newblock {\em J.Comp.Sys.Sci.}, 66(4):671--687, 2003.
\newblock Special issue of invited papers from PODS'01.

\bibitem[AHPR13]{adcock2013breaking}
B.~Adcock, A.~C. Hansen, C.~Poon, and B.~Roman.
\newblock Breaking the coherence barrier: A new theory for compressed sensing.
\newblock Submitted, 2013.

\bibitem[BDDW08]{RefWorks:86}
R.~Baraniuk, M.~Davenport, R.~DeVore, and M.~Wakin.
\newblock A simple proof of the restricted isometry property for random
  matrices.
\newblock {\em Constr. Approx.}, 28(3):253--263, 2008.

\bibitem[BDWZ12]{burq2012weighted}
N.~Burq, S.~Dyatlov, R.~Ward, and M.~Zworski.
\newblock Weighted eigenfunction estimates with applications to compressed
  sensing.
\newblock {\em SIAM Journal on Mathematical Analysis}, 44(5):3481--3501, 2012.

\bibitem[Bou03]{B03}
O.~Bousquet.
\newblock {\em Concentration inequalities for sub-additive functions using the
  entropy method}, pages 213--247.
\newblock Stochastic Inequalities and Applications, Progr. Probable. 56.
  Birkauser, Basel, 2003.

\bibitem[BS07]{RefWorks:79}
R.~Baraniuk and P.~Steeghs.
\newblock Compressive radar imaging.
\newblock In {\em Proc. IEEE Radar Conf}, pages 128--133. IEEE, 2007.

\bibitem[BSO08]{bobin2008compressed}
J.~Bobin, J.-L. Starck, and R.~Ottensamer.
\newblock Compressed sensing in astronomy.
\newblock {\em IEEE J. Sel. Top. Signa.}, 2(5):718--726, 2008.

\bibitem[Car85]{maurey}
B.~Carl.
\newblock Inequalities of {B}ernstein-{J}ackson-type and the degree of
  compactness of operators in {B}anach spaces.
\newblock {\em Ann. Inst. Fourier (Grenoble)}, 35(3):79--118, 1985.

\bibitem[CCS08]{cai2008framelet}
J.-F. Cai, R.~H. Chan, and Z.~Shen.
\newblock A framelet-based image inpainting algorithm.
\newblock {\em Appl. Comput. Harmon. A.}, 24(2):131--149, 2008.

\bibitem[CD04]{RefWorks:273}
E.~J. Cand\`es and D.~L. Donoho.
\newblock New tight frames of curvelets and optimal representations of objects
  with piecewise $c^2$ singularities.
\newblock {\em Commun. Pur. Appl. Math.}, 57(2):219--266, 2004.

\bibitem[CENR10]{RefWorks:60}
E.~J. Cand\`es, Y.~C. Eldar, D.~Needell, and P.~Randall.
\newblock Compressed sensing with coherent and redundant dictionaries.
\newblock {\em Appl. Comput. Harmon. A.}, 31(1):59--73, 2010.

\bibitem[CFG13]{reading1}
E.~J. Cand{\`e}s and C.~Fernandez-Granda.
\newblock Super-resolution from noisy data.
\newblock {\em J. Fourier Anal. Appl.}, 19(6):1229--1254, 2013.

\bibitem[CFG14]{reading2}
E.~J. Cand{\`e}s and C.~Fernandez-Granda.
\newblock Towards a mathematical theory of super-resolution.
\newblock {\em Commun. Pur. Appl. Math.}, 67(6):906--956, 2014.

\bibitem[CFM{\etalchar{+}}11]{CFMWZ11}
P.~Casazza, M.~Fickus, D.~Mixon, Y.~Wang, and Z.~Zhou.
\newblock Constructing tight fusion frames.
\newblock {\em Appl. Comput. Harmon. A.}, 30(2):175--187, 2011.

\bibitem[CGV13]{cheraghchi2013restricted}
M.~Cheraghchi, V.~Guruswami, and A.~Velingker.
\newblock Restricted isometry of {F}ourier matrices and list decodability of
  random linear codes.
\newblock {\em SIAM J. Comput.}, 42(5):1888--1914, 2013.

\bibitem[CHKK11]{CHKK11}
P.~{C}asazza, A.~{H}einecke, F.~{K}rahmer, and G.~{K}utyniok.
\newblock {O}ptimally sparse frames.
\newblock {\em IEEE T. Inform. Theory}, 57(11):7279--7287, 2011.

\bibitem[CP12]{RefWorks:595}
E.~J. Cand\`es and Y.~Plan.
\newblock A probabilistic and ripless theory of compressed sensing.
\newblock {\em IEEE T. Inform. Theory}, 2012.
\newblock to appear.

\bibitem[CR07]{RefWorks:379}
E.~J. Cand\`es and J.~Romberg.
\newblock Sparsity and incoherence in compressive sampling.
\newblock {\em IEEE T. Inform. Theory}, 23(3):969, 2007.

\bibitem[CT05a]{RefWorks:48}
E.~J. Cand\`es and T.~Tao.
\newblock Decoding by linear programming.
\newblock {\em IEEE T. Inform. Theory}, 51:4203--4215, 2005.

\bibitem[CT05b]{RefWorks:577}
E.~J. Cand\`es and T.~Tao.
\newblock Near-optimal signal recovery from random projections: Universal
  encoding strategies.
\newblock {\em IEEE T. Inform. Theory}, 52(12):5406--5425, Nov. 2005.

\bibitem[DeV07]{RefWorks:36}
R.~A. DeVore.
\newblock Deterministic constructions of compressed sensing matrices, 2007.
\newblock Unpublished manuscript.

\bibitem[DG03]{Dasgupta}
S.~Dasgupta and A.~Gupta.
\newblock An elementary proof of a theorem of {J}ohnson and {L}indenstrauss.
\newblock {\em IEEE T. Inform. Theory}, 22:60--65, 2003.

\bibitem[DNW12]{Paper5}
M.~Davenport, D.~Needell, and M.~B. Wakin.
\newblock Signal space {CoSaMP} for sparse recovery with redundant
  dictionaries.
\newblock {\em IEEE T. Inform. Theory}, 59(10):6820, 2012.

\bibitem[Don06]{RefWorks:70}
D.~L. Donoho.
\newblock Compressed sensing.
\newblock {\em IEEE T. Inform. Theory}, 52(4):1289--1306, Apr. 2006.

\bibitem[Dud67]{dudley1967sizes}
R.~M. Dudley.
\newblock The sizes of compact subsets of {H}ilbert space and continuity of
  gaussian processes.
\newblock {\em J. Funct. Anal.}, 1(3):290--330, 1967.

\bibitem[Dut89]{RefWorks:345}
P.~Dutilleux.
\newblock An implementation of the ``algorithme \`a trous'' to compute the
  wavelet transform, 1989.
\newblock in \textit{Wavelets: Time-Frequency Methods and Phase-Space J.M.
  Combes, A. Grossmann, and P. Tchamitchian, Eds. New York: Springer}.

\bibitem[ELL08]{easley2008sparse}
G.~Easley, D.~Labate, and W.-Q. Lim.
\newblock Sparse directional image representations using the discrete shearlet
  transform.
\newblock {\em Appl. Comput. Harmon. A.}, 25(1):25--46, 2008.

\bibitem[EMR07]{RefWorks:340}
M.~Elad, P.~Milanfar, and R.~Rubinstein.
\newblock Analysis versus synthesis in signal priors.
\newblock {\em Inverse Probl.}, 23(3):947--968, 2007.

\bibitem[Fou15]{Foucart15}
S.~Foucart.
\newblock Dictionary-sparse recovery via thresholding-based algorithms.
\newblock {\em J. Fourier Anal. Appl.}, pages 1--14, 2015.

\bibitem[FR13]{RefWorks:45}
S.~Foucart and H.~Rauhut.
\newblock {\em A Mathematical Introduction to Compressive Sensing}.
\newblock Appl. Num. Harm. Anal., Birkhauser, 2013.

\bibitem[Gir14]{giryes2014sampling}
R.~Giryes.
\newblock Sampling in the analysis transform domain.
\newblock {\em arXiv preprint arXiv:1410.6558}, 2014.

\bibitem[GN13]{giryes2013greedy}
R.~Giryes and D.~Needell.
\newblock Greedy signal space methods for incoherence and beyond.
\newblock {\em Appl. Comput. Harmon. A.}, 2013.
\newblock To appear.

\bibitem[GN14]{giryes2014near}
R.~Giryes and D.~Needell.
\newblock Near oracle performance and block analysis of signal space greedy
  methods.
\newblock Submitted, 2014.

\bibitem[GNE{\etalchar{+}}14]{RefWorks:607}
R.~Giryes, S.~Nam, M.~Elad, R.~Gribonval, and M.~E. Davies.
\newblock Greedy-like algorithms for the cosparse analysis model.
\newblock {\em Linear Algebra Appl.}, 441:22--60, 2014.

\bibitem[GPV15]{GPV14}
R.~Giryes, Y.~Plan, and R.~Vershynin.
\newblock On the effective measure of dimension in analysis cosparse models,
  2015.
\newblock in Proc. 11th Int. Conf. SampTA.

\bibitem[HD15]{hampton2015compressive}
J.~Hampton and A.~Doostan.
\newblock Compressive sampling of polynomial chaos expansions: convergence
  analysis and sampling strategies.
\newblock {\em Journal of Computational Physics}, 280:363--386, 2015.

\bibitem[HS09]{RefWorks:289}
M.~A. Herman and T.~Strohmer.
\newblock High-resolution radar via compressed sensing.
\newblock {\em IEEE T. Signal Proces.}, 57(6):2275--2284, Jun. 2009.

\bibitem[HV11]{RefWorks:327}
A.~Hinrichs and J.~Vybiral.
\newblock Johnson-{L}indenstrauss lemma for circulant matrices.
\newblock {\em Random Structures and algorithms}, 39:391--398, 2011.

\bibitem[Khi23]{khintchine1923dyadische}
A.~Khintchine.
\newblock {\"U}ber dyadische br{\"u}che.
\newblock {\em Math. Z.}, 18(1):109--116, 1923.

\bibitem[KL07]{shearlet07}
G.~Kutyniok and D.~Labate.
\newblock Construction of regular and irregular shear lets.
\newblock {\em J. Wavelet Theory and Appl.}, 1:1--10, 2007.

\bibitem[KMR14]{KMR12}
F.~{K}rahmer, S.~{M}endelson, and H.~{R}auhut.
\newblock Suprema of chaos processes and the restricted isometry property.
\newblock {\em Comm. Pure Appl. Math.}, 67(11):1877--1904, 2014.

\bibitem[KW11]{RefWorks:230}
F.~Krahmer and R.~Ward.
\newblock New and improved {J}ohnson {L}indenstrauss embeddings via the
  restricted isometry property.
\newblock {\em SIAM J. Math. Anal.}, 43, 2011.

\bibitem[KW14]{KW14}
F.~Krahmer and R.~Ward.
\newblock Stable and robust sampling strategies for compressive imaging.
\newblock {\em IEEE T. Image Process.}, 23:612--622, 2014.

\bibitem[LDP07]{RefWorks:82}
M.~Lustig, D.~Donoho, and J.~M. Pauly.
\newblock Sparse {M}{R}{I}: The application of compressed sensing for rapid
  {M}{R} imaging.
\newblock {\em Magnet. Reson. Med.}, 58(6):1182--1195, 2007.

\bibitem[LLKW05]{labate2005sparse}
D.~Labate, W.-Q. Lim, G.~Kutyniok, and G.~Weiss.
\newblock Sparse multidimensional representation using shearlets.
\newblock In {\em Optics \& Photonics 2005}, pages 59140U--59140U.
  International Society for Optics and Photonics, 2005.

\bibitem[LML12]{liu2012compressed}
Y.~Liu, T.~Mi, and S.~Li.
\newblock Compressed sensing with general frames via
  optimal-dual-based-analysis.
\newblock {\em IEEE T. Inform. Theory}, 58(7):4201--4214, 2012.

\bibitem[Mal99]{RefWorks:224}
S.~Mallat.
\newblock {\em A Wavelet Tour of Signal Processing}.
\newblock Academic Press, London, 2nd edition, 1999.

\bibitem[NDEG13]{RefWorks:581}
S.~Nam, M.~E. Davies, M.~Elad, and R.~Gribonval.
\newblock The cosparse analysis model and algorithms.
\newblock {\em Appl. Comp. Harm. Anal.}, 34(1):30--56, 2013.

\bibitem[NW13a]{needell2013near}
D.~Needell and R.~Ward.
\newblock Near-optimal compressed sensing guarantees for total variation
  minimization.
\newblock {\em IEEE T. Image Process.}, 22(10):3941--3949, 2013.

\bibitem[NW13b]{needell2013stable}
D.~Needell and R.~Ward.
\newblock Stable image reconstruction using total variation minimization.
\newblock {\em SIAM J. Imaging Sciences}, 6(2):1035--1058, 2013.

\bibitem[PE13]{peleg2013performance}
T.~Peleg and M.~Elad.
\newblock Performance guarantees of the thresholding algorithm for the cosparse
  analysis model.
\newblock {\em IEEE T. Inform. Theory}, 59(3):1832--1845, 2013.

\bibitem[Poo15a]{poon14}
C.~Poon.
\newblock On the role of total variation in compressed sensing.
\newblock {\em SIAM J. Imaging Science}, 8(1):682--720, 2015.

\bibitem[Poo15b]{P15}
C.~Poon.
\newblock Structure dependent sampling in compressed sensing: theoretical
  guarantees for tight frames.
\newblock {\em Arxiv preprint}, 2015.

\bibitem[{R}au10]{ra09-1}
H.~{R}auhut.
\newblock Compressive sensing and structured random matrices.
\newblock In M.~{F}ornasier, editor, {\em Theoretical Foundations and Numerical
  Methods for Sparse Recovery}, volume~9 of {\em Radon Series Comp. Appl.
  Math.}, pages 1--92. deGruyter, 2010.
\newblock Corrected Version (April 4, 2011): Noncommutative Khintchine
  inequality for Rademacher chaos (Theorem 6.22) corrected, Proof of Theorem
  9.3 corrected. Proof of Lemma 8.2 corrected.

\bibitem[RK13]{rauhut2013analysis}
H.~Rauhut and M.~Kabanava.
\newblock Analysis {$\ell_1$}-recovery with frames and {G}aussian measurements.
\newblock {\em arXiv preprint arXiv:1306.1356}, 2013.

\bibitem[RRT12]{rauhut2012restricted}
H.~Rauhut, J.~Romberg, and J.~A. Tropp.
\newblock Restricted isometries for partial random circulant matrices.
\newblock {\em Appl. Comput. Harmon. A.}, 32(2):242--254, 2012.

\bibitem[RSV08]{RefWorks:583}
H.~Rauhut, K.~Schnass, and P.~Vandergheynst.
\newblock Compressed sensing and redundant dictionaries.
\newblock {\em IEEE T. Inform. Theory}, 54(5):2210--2219, 2008.

\bibitem[RV08]{RefWorks:285}
M.~Rudelson and R.~Vershynin.
\newblock On sparse reconstruction from {F}ourier and {G}aussian measurements.
\newblock {\em Comm. Pure Appl. Math.}, 61:1025--1045, 2008.

\bibitem[RW11]{rauhut2011sparse}
H.~Rauhut and R.~Ward.
\newblock Sparse recovery for spherical harmonic expansions.
\newblock {\em arXiv preprint arXiv:1102.4097}, 2011.

\bibitem[RW12]{rw11}
H.~Rauhut and R.~Ward.
\newblock Sparse {L}egendre expansions via $\ell_1$-minimization.
\newblock {\em J. Approx. Theory}, 164:517=533, 2012.

\bibitem[RW13]{rauhut2013interpolation}
H.~Rauhut and R.~Ward.
\newblock Interpolation via weighted $\ell_1$ minimization.
\newblock {\em arXiv:1308.0759}, 2013.

\bibitem[SB12]{stoica2012sparse}
P.~Stoica and P.~Babu.
\newblock Sparse estimation of spectral lines: Grid selection problems and
  their solutions.
\newblock {\em IEEE T. Signal Proces.}, 60(2):962--967, 2012.

\bibitem[SED04]{RefWorks:342}
J.~L. Starck, M.~Elad, and D.~L. Donoho.
\newblock Redundant multiscale transforms and their application for
  morphological component analysis.
\newblock {\em Adv. Imag. Elect. Phys.}, 132, 2004.

\bibitem[SFM07]{RefWorks:344}
J.~L. Starck, J.~Fadili, and F.~Murtagh.
\newblock The undecimated wavelet decomposition and its reconstruction.
\newblock {\em IEEE T.Sig.Proc.}, 16(2):297--309, 2007.

\bibitem[TBSR12]{tang2012compressive}
G.~Tang, B.~N. Bhaskar, P.~Shah, and B.~Recht.
\newblock Compressive sensing off the grid.
\newblock In {\em Allerton Conf. Communication, Control, and Computing}, pages
  778--785. IEEE, 2012.

\bibitem[VW04]{vale2004tight}
R.~Vale and S.~Waldron.
\newblock Tight frames and their symmetries.
\newblock {\em Constr. Approx.}, 21(1):83--112, 2004.

\end{thebibliography}
\bibliographystyle{myalpha}

\end{document}